\title{\LARGE \bf Metrics and Algorithms for Controllability of
  Complex Networks}
\title{\LARGE \bf Controllability Metrics, Limitations and Algorithms
  \\ for Complex Networks}
\author{Fabio Pasqualetti, Sandro Zampieri, and Francesco Bullo %
  \thanks{This material is based upon work supported in part by NSF Award 1135819
and in part by ARO Award W911NF-11-1-0092 and by European
Communitys Seventh Framework Program under grant agreement n. 257462
HYCON2 Network of Excellence.}
  \thanks{Fabio Pasqualetti is with the Mechanical Engineering
    Department, University of California at Riverside,
    \href{mailto:fabiopas@engr.ucr.edu}{\texttt{fabiopas@engr.ucr.edu}}
    .}
  \thanks{Sandro Zampieri is with the Department of Information
    Engineering, University of Padova, Italy,
    \href{mailto:sandro.zampieri@dei.unipd.it}{\texttt{sandro.zampieri@dei.unipd.it}}.}
  \thanks{Francesco Bullo is with the Mechanical Engineering
    Department and the Center for Control, Dynamical Systems and
    Computation, University of California at Santa Barbara,
    \href{mailto:bullo@engineering.ucsb.edu}{\texttt{bullo@engineering.ucsb.edu}}
    .}  }
\newtheorem{theorem}{Theorem}[section]
\newtheorem{corollary}[theorem]{Corollary}
\newtheorem{remark}{Remark}
\newtheorem{example}{Example}
\newcommand{\map}[3]{#1: #2 \rightarrow #3}
\newcommand{\setdef}[2]{\{#1 \; : \; #2\}}
\newcommand{\subscr}[2]{{#1}_{\textup{#2}}}
\newcommand{\until}[1]{\{1,\dots,#1\}}
\newcommand{\real}{\mathbb{R}}
\newcommand{\transpose}{\mathsf{T}} 
\newcommand{\mc}{\mathcal}
\DeclareSymbolFont{bbold}{U}{bbold}{m}{n}
\DeclareSymbolFontAlphabet{\mathbbold}{bbold}
\newcommand\oprocendsymbol{\hbox{$\square$}}
\newcommand\oprocend{\relax\ifmmode\else\unskip\hfill\fi\oprocendsymbol}
\begin{document}
\maketitle

\thispagestyle{empty}
\pagestyle{empty}

\begin{abstract}
  This paper studies the problem of controlling complex networks, that
  is, the joint problem of selecting a set of control nodes and of
  designing a control input to steer a network to a target state. For
  this problem (i) we propose a metric to quantify the difficulty of
  the control problem as a function of the required control energy,
  (ii) we derive bounds based on the system dynamics (network topology
  and weights) to characterize the tradeoff between the control energy
  and the number of control nodes, and (iii) we propose an open-loop
  control strategy with performance guarantees. In our strategy we
  select control nodes by relying on network partitioning, and we
  design the control input by leveraging optimal and distributed
  control techniques. Our findings show several control limitations
  and properties. For instance, for Schur stable and symmetric
  networks: (i) if the number of control nodes is constant, then the
  control energy increases exponentially with the number of network
  nodes, (ii) if the number of control nodes is a fixed fraction of
  the network nodes, then certain networks can be controlled with
  constant energy independently of the network dimension, and (iii)
  clustered networks may be easier to control because, for
  sufficiently many control nodes, the control energy depends only on
  the controllability properties of the clusters and on their coupling
  strength. We validate our results with examples from power networks,
  social networks, and epidemics spreading.
\end{abstract}

\section{Introduction}
Networks accomplish complex behaviors via local interactions of simple
units. Electrical power grids, mass transportation systems, and
cellular networks are instances of modern technological networks,
while metabolic and brain networks are biological examples. The
ability to control and reconfigure complex networks via external
controls is fundamental to guarantee reliable and efficient network
functionalities. Despite important advances in the theory of control
of dynamical systems, several questions regarding the control of
complex networks are largely unexplored including, for instance, the
relation between network topology and its degree of controllability.

The control problem of complex networks consists of the selection of a
set of control nodes, and the design of a control law to steer the
network to a target state. Inspired by classic controllability notions
for dynamical systems \cite{TK:80,REK-YCH-SKN:63,RWB:70,KJR:88}, we
adopt the worst-case energy to drive a network from the origin to a
target state as controllability metric. By combining this
controllability notion with graph theory, we characterize tradeoffs
between the energy to control a network (with first-order dynamics)
and the number of control nodes, and develop an open-loop distributed
control strategy with guaranteed performance and computational
complexity.


\noindent
\textbf{Related work} The notion of controllability of a dynamical
system was first introduced in \cite{REK-YCH-SKN:63}, and it refers to the
possibility of driving the state of a dynamical system to a specific
target state by means of a control input. Several structural
conditions ensuring controllability have been proposed; see for
instance \cite{TK:80,RWB:70,KJR:88}. The concept of controllability has
received recent interest in the context of complex networks, where
classic methods are often inapplicable due to the system dimension,
and where a graph-inspired understanding of controllability rather
than a matrix-theoretical one is preferable.

Controllability of complex networks is addressed in
\cite{YYL-JJS-ALB:11} by means of graph-theoretic tools from
structured control theory \cite{KJR:88}. In \cite{YYL-JJS-ALB:11} the
application of standard control results to real networks reveals that
the number of control nodes is mainly related to the network degree
distribution, and that sparse inhomogeneous networks are most
difficult to control, while dense and homogeneous networks require
only a few control nodes. Analogous results are derived in
\cite{YYL-JJS-ALB:13} for observability of complex networks. The
approach to controllability and observability undertaken in
\cite{YYL-JJS-ALB:11,YYL-JJS-ALB:13} has several shortcomings. First,
the presented results are \emph{generic}, in the sense that they hold
for \emph{almost every} choice of the network parameters
\cite{WMW:85}, but they may fail to hold if certain symmetries or
constraints are present \cite[Section 15]{KJR:88},
\cite{GP-GN:10}. Second, most results in
\cite{YYL-JJS-ALB:11,YYL-JJS-ALB:13} rely on particular
interconnection properties of the considered networks, perhaps the
absence of self-loops around the network nodes. In fact, it follows
from \cite[Theorem 14.2]{KJR:88}, equivalently from \cite[Theorem
1]{JMD-CC-JW:03}, that every strongly connected network with
self-loops is generically controllable by any single node, which
contradicts the conclusions drawn in \cite{YYL-JJS-ALB:11}. This
discrepancy is underlined in \cite{FJM-AS:11} for the case of
biological networks, and more generally in
\cite{NJC-EJC-DAV-JSF-CTB:12}. Third, the binary notion of
controllability proposed in \cite{REK-YCH-SKN:63} and adopted in
\cite{YYL-JJS-ALB:11} does not characterize the difficulty of the
control task. In practice, although a network may be generically
controllable by any single node, the actual control input may not be
implementable due to actuator constraints and limitations. Finally,
the design of the actual control input to drive a network to a
particular state is not specified in \cite{YYL-JJS-ALB:11}, and it
remains to date an outstanding problem for complex networks, due to
their dimension and absence of a central controller.

We depart from
\cite{YYL-JJS-ALB:11,YYL-JJS-ALB:13,GP-GN:10,NJC-EJC-DAV-JSF-CTB:12},
and analogously from \cite{AR-MJ-MM-ME:09,AO:13,GN-GP:13}, by adopting
a quantitative measure of network controllability, namely the
worst-case control energy, by characterizing tradeoffs between the
difficulty of the control task and the number of control nodes and,
finally, by proposing an open-loop control strategy suitable for
complex networks.

A quantitative approach to network controllability has recently been
adopted in
\cite{GY-JR-YCL-CHL-BL:12,JS-AEM:13,THS-JL:13,AC-MM:12}. With respect
to \cite{GY-JR-YCL-CHL-BL:12}, although our measures of network
controllability coincide, we focus on the tradeoffs between control
energy and number of control nodes, and on the design of a distributed
control strategy, as opposed to scaling laws for the control energy as
a function of the control horizon. With respect to \cite{JS-AEM:13} we
provide a rigorous framework for network controllability and, in fact,
our findings are aligned and mathematically support the discussions in
\cite{JS-AEM:13}. With respect to \cite{THS-JL:13} we adopt a
different network controllability measure, which we show to be more
appropriate for the control of most complex networks. Finally, with
respect to \cite{AC-MM:12}, we consider a more general class of
network dynamics, interconnection graphs, and bounds.


\noindent
\textbf{Paper contributions} The main contributions of this paper are
threefold. First, we study network controllability from an energy
perspective, which we quantify with the smallest eigenvalue of the
controllability Gramian (Section \ref{sec: preliminary}). We show that, if
the number of control nodes is constant, then certain controllable networks
are practically uncontrollable, as the control energy depends exponentially
on the ratio between the network cardinality and the number of control
nodes.


Second, we characterize a tradeoff between the control energy and the
number of control nodes (Section \ref{sec: bounds}). In particular, we
derive an upper bound for the smallest eigenvalue of the
controllability Gramian as a function of the number of control nodes,
and a lower bound on the number of control nodes when the control
energy is fixed.  Our bounds show for instance that the control of
stable and symmetric networks with constant energy requires the number
of control nodes to grow linearly with the network dimension.
These results provide a quantitative measure of the numerical findings
in \cite{JS-AEM:13}, and are in accordance with existing results in
control theory \cite{ACA-DCS-YZ:02}.

Third, we propose the \emph{decoupled control strategy} for the
control of stable complex networks (Section \ref{sec: control}). The
decoupled control strategy consists of network partitioning, selection
of the control nodes, and the design of an open-loop distributed
control law to steer the network from the origin to a target state. We
characterize the performance of the decoupled control strategy and we
show that, with sufficiently many control nodes, the energy to control
a network depends only on the controllability properties of its parts,
and on their coupling strength. Conversely, we prove that certain
networks admit a distributed control strategy where the control energy
is independent of the network dimension. Our decoupled control
strategy constitutes a first scalable open-loop solution for the
distributed control of complex networks, and it leads to a novel
network centrality notion inspired by systems controllability.


Finally, we compare the effectiveness of our decoupled control law
with other network control methods through examples from power
networks, social networks, and epidemics (Section
\ref{sec:simulations}). Our numerical studies show that our decoupled
control strategy outperforms existing control techniques while being
scalable, and amenable to distributed implementation.

Our bounds and techniques apply to diagonalizable networks, and are
simpler and tighter for normal networks, that is, networks with normal
weighted adjacency matrix \cite{RAH-CRJ:85}.

This paper contains three additional minor contributions. First, we
show that the problem of selecting control nodes to maximize the trace
of the controllability Gramian admits a closed-form solution
(Appendix). Second, we generalize our results to the observability
problem of complex networks (Remark \ref{remark: observability}).
Third, we describe a heuristic strategy based on \emph{modal
  controllability} \cite{AMAH-AHN:89} to select control nodes
(Remark~\ref{remark:selection within cluster}).

\noindent
\textbf{Notation} The following notation is adopted throughout the
paper. For a vector $v \in \real^n$, we let $\| v \|_2$ denote its
Euclidean norm, that is,
\begin{align*}
  \| v \|_2 := \sqrt{v^\transpose v},
\end{align*}
where $^\transpose$ denotes transposition. For a matrix $M \in
\real^{n \times n}$, let $\text{spec} (M)$ denote the set of
eigenvalues of $M$, and let
\begin{align*}
  \lambda_\text{min} (M) &:= \min \setdef{|\lambda |}{ \lambda \in
    \text{spec} (M)} ,\\
  \lambda_\text{max} (M) &:= \max \setdef{|\lambda |}{ \lambda \in
    \text{spec} (M)} .
\end{align*}
Let $\sigma (M)$ be the set of the singular values of $M$, that is,
\begin{align*}
  \sigma (M) := \setdef{\lambda^{1/2}}{\lambda \in \text{spec}
    (M^\transpose M)} .
\end{align*}
Let $\sigma_\text{max} (M) := \max \setdef{\lambda}{\lambda \in
  \sigma(M)}$. The spectral norm of $M$ is denoted by $\| M \|_2$,
where
\begin{align*}
  \| M \|_2 := \sigma_\text{max} (M).
\end{align*}
For the vector valued signal $\map{s}{\mathbb{N}_{\ge 0}}{\real^n}$,
we use $\| s \|_{2,T} $ to denote its norm, that is,
\begin{align*}
  \| s \|_{2,T} := \sqrt{\sum_{t = 0}^{T-1} \| s(t) \|_2 }.
\end{align*}
Vector norms, matrix norms, and signal norms will be distinguished
from the context.

\section{Network Model and Preliminary Results}\label{sec:
  preliminary}
Consider a network represented by the directed graph $\mc G := (\mc V,
\mc E)$, where $\mc V := \until{n}$ and $\mc E \subseteq \mc V \times
\mc V$ are the vertices and the edges sets, respectively. Let $a_{ij}
\in \real$ be the weight associated with the edge $(i,j) \in \mc E$,
and define the \emph{weighted adjacency matrix} of $\mc G$ as $A =
[a_{ij}]$, where $a_{ij} = 0$ whenever $(i,j) \not\in \mc E$. We
assume the matrix $A$ to be diagonalizable, that is, $A$ admits a
basis of eigenvectors \cite{RAH-CRJ:85}. We associate a real value
(\emph{state}) with each node, collect the nodes states into a vector
(\emph{network state}), and define the map $\map{x}{\mathbb{N}_{\ge
    0}}{\real^n}$ to describe the evolution (\emph{network dynamics})
of the network state over time. We consider the discrete time, linear,
and time-invariant network dynamics described by the equation
\begin{align}\label{eq:system}
  x(t+ 1) = A x(t).
\end{align}

Controllability of the network $\mc G$ refers to the possibility of
steering the network state to an arbitrary configuration by means of
external controls. We assume that a set 
\begin{align*}
  \mc K := \{k_1, \dots, k_m \}\subseteq \mc V
\end{align*}
of nodes can be independently controlled, and we let 
\begin{align}\label{eq: B}
  B_{\mc K} := 
  \begin{bmatrix}
    e_{k_1} & \cdots & e_{k_m}
  \end{bmatrix}
\end{align}
be the \emph{input matrix}, where $e_i$ denotes the $i$-th canonical
vector of dimension $n$. The network with control nodes $\mc K$ reads
as 
\begin{align}\label{eq:controlled}
  x (t +1) &=  A x(t) + B_{\mc K} u_{\mc K} (t),
\end{align}
where $\map{u_{\mc K}}{\mathbb{N}_{\ge 0}}{\real}$ is the control
signal injected into the network via the nodes $\mc K$. A network is
controllable in $T \in \mathbb{N}$ steps by the set of control nodes
$\mc K$ if and only if for every state $\subscr{x}{f} \in \real^n$
there exists an input $u_{\mc K}$ such that $x(T) = \subscr{x}{f}$
with $x(0) = 0$ \cite{TK:80}. Controllability of dynamical systems is
a well-understood property, and it can be ensured by different
structural conditions \cite{REK-YCH-SKN:63,RWB:70,KJR:88}. For
instance, let $\mc C_{\mc K,T}$, with $T \in \mathbb{N}_{\ge 1}$, be
the \emph{controllability matrix} defined as
\begin{align*}
  \mc C_{\mc K,T} :=
  \begin{bmatrix}
    B_{\mc K} & A B_{\mc K} & \cdots & A^{T-1} B_{\mc K}
  \end{bmatrix}
  .
\end{align*}
The network \eqref{eq:controlled} is controllable in $T$ steps by the
nodes $\mc K$ if and only if the controllability matrix $\mc C_{\mc
  K,T}$ is of full row rank.

The above notion of controllability is qualitative, and it does not
quantify the difficulty of the control task as measured, for instance,
by the control energy needed to reach a desired state. As a matter of
fact, many controllable networks require very large control energy to
reach certain states \cite{JS-AEM:13}. To formalize this discussion,
define the $T$-steps \emph{controllability Gramian} by
\begin{align*}
  \mc W_{\mc K,T} := \sum_{\tau = 0}^{T-1} A^{ \tau} B_{\mc K}
  B_{\mc K}^\transpose (A^\transpose)^{\tau} = \mc C_{\mc K,T} \mc
  C_{\mc K,T}^\transpose, 
\end{align*}
It can be verified that the controllability Gramian $\mc W_{\mc K,T}$
is positive definite if and only if the network is controllable in $T$
steps by the nodes $\mc K$ \cite{TK:80}.

Let the network be controllable in $T$ steps, and let $\subscr{x}{f}$
be the desired final state at time $T$, with $\|\subscr{x}{f} \|_2 =
1$. Define the energy of the control input $u_{\mc K}$ as
\begin{align*}
  \text{E}( u_{\mc K},T) := \| u_{\mc K} \|_{2,T}^2 = \sum_{\tau =
    0}^{T-1} \| u_{\mc K} (\tau) \|_2^2,
\end{align*}
where $T$ is the control horizon. The unique control input that steers
the network state from $x(0) = 0$ to $x(T) = \subscr{x}{f}$ with
minimum energy is \cite{TK:80}
\begin{align}\label{eq:min energy control} 
  u_{\mc K}^* (t) := B_{\mc K}^\transpose (A^\transpose)^{T - t - 1}
  \mc W_{\mc K,T}^{-1} \, \subscr{x}{f} ,
\end{align}
with $t \in \{0,\dots, T-1\}$.
Then, it can be seen that
\begin{align}\label{eq: energy lambda min}
  \text{E}( u_{\mc K}^*, T) = \sum_{\tau = 0}^{T-1} \| u_{\mc K}^* (\tau)
  \|_2^2 = \subscr{x}{f}^\transpose \mathcal{W}_{\mc K,T}^{-1}
  \subscr{x}{f} \le \lambda_\text{min}^{-1} (\mathcal{W}_{\mc K,T}) ,
\end{align}
where equality is achieved whenever $\subscr{x}{f}$ is an eigenvector
of $\mathcal{W}_{\mc K,T}$ associated with $\lambda_\text{min}
(\mathcal{W}_{\mc K,T})$. Because the control energy is limited in
practical applications, controllable networks featuring small Gramian
eigenvalues cannot be steered to certain states.

\begin{example}{\bf \emph{(Controllable networks may exhibit
      practically uncontrollable states)}}\label{example:energy line}
  Consider the network $\mc G$ with $n$ nodes, weighted adjacency
  matrix $A := [a_{ij}]$ defined as
  \begin{align*}
    a_{ij} :=
    \begin{cases}
      \frac{1}{2}, & \text{if } j = i - 1 \text{ and } i \in
      \{2,\dots, n\},\\
      0, & \text{otherwise} ,
    \end{cases}
  \end{align*}
  and control node $\mc K = \{1\}$. Notice that the controllability
  matrix $\mc C_{\mc K,n}$ is diagonal and nonsingular, and its $i$-th
  diagonal entry equals $2^{-i+1}$. Since $A^t B_{\mc K} = 0$ for all
  $t \ge n$, we have $\mc W_{\mc K, \tau} = \mc C_{\mc K,n} \mc C_{\mc
    K,n}^\transpose$ for all $\tau \ge n$, and the smallest eigenvalue
  of the controllability Gramian $\mc W_{\mc K, \tau}$ equals $2^{-2n
    + 2}$ for all $\tau \ge n$. We conclude that the network $\mc G$
  with control node $\mc K$ is controllable in $T \ge n$ steps, yet
  the control energy grows exponentially with the network cardinality.
  \oprocend
\end{example}
\smallskip

In this work we measure controllability of a network based on the
smallest eigenvalue of the controllability Gramian. With this choice
we study controllability from a worst-case perspective, looking at the
target states requiring the largest control energy to be reached; see
also \cite{IR-MG-MM:11}. We conclude this section by discussing
alternative controllability metrics.

\begin{remark}{\bf \emph{(Controllability metrics)}}\label{remark:
    cost functions}
  Different quantitative measures of controllability of dynamical
  systems have been considered in the last years \cite{PCM-HIW:72}. In
  addition to the smallest eigenvalue of the controllability Gramian
  $\lambda_\text{min} (\mathcal{W}_{\mc K,T})$, the trace of the
  inverse of the controllability Gramian $\text{Trace} (\mc W_{\mc
    K,T}^{-1})$, and the determinant of the controllability Gramian
  $\text{Det} (\mc W_{\mc K,T})$ have been proposed. It can be shown
  that, while $\text{Trace} (\mc W_{\mc K,T}^{-1})$ measures the
  average control energy over random target states, $\text{Det} (\mc
  W_{\mc K,T})$ is proportional to the volume of the ellipsoid
  containing the states that can be reached with a unit-energy control
  input. The selection of the control nodes for the optimization of
  these metrics is usually a computationally hard combinatorial
  problem \cite{AO:13}, for which heuristics without performance
  guarantees and non-scalable optimization procedures have been
  proposed \cite{AMAH-AHN:89,MVDW-BDJ:01,LSP-KRK:99}.

  Motivated by the relation
  \begin{align*}
    \frac{\text{Trace} (\mc W_{\mc K,T}^{-1})}{n} \ge
    \frac{n}{\text{Trace} (\mc W_{\mc K,T})} ,
  \end{align*}
  the trace of the controllability Gramian $\text{Trace} (\mc W_{\mc
    K,T})$ has also been used as an overall measure of controllability
  in \cite{BM-DK-DG:04,HRS-MT:12}, and recently in
  \cite{THS-JL:13}. Unlike the controllability metrics
  $\lambda_\text{min} (\mathcal{W}_{\mc K,T})$, $\text{Trace} (\mc
  W_{\mc K,T}^{-1})$, and $\text{Det} (\mc W_{\mc K,T})$, the
  selection of the control nodes to maximize $\text{Trace} (\mc W_{\mc
    K,T})$ admits a closed-form solution (see
  Appendix). Unfortunately, the maximization of $\text{Trace} (\mc
  W_{\mc K,T})$ does not automatically ensure controllability and, as
  we show in Sections \ref{section:example circle} and
  \ref{sec:simulations}, it often leads to a poor selection of the
  control nodes with respect to the worst-case control energy to reach
  a target state.  \oprocend
\end{remark}

\section{Control Nodes and Control Energy}\label{sec: bounds}
In this section we characterize a tradeoff between the number of
control nodes and the energy required to drive a network to a target
state. Recall that the condition number of an invertible matrix $M$ is
$\text{cond} (M) = \| M \|_2 \| M^{-1}\|_2$.

\begin{theorem}{\bf \emph{(Control energy and number of control nodes
      for unstable networks)}}\label{thm:upper bound energy unstable
    networks}
  Consider a network $\mc G = (\mc V, \mc E)$ with $|\mc V| = n$,
  weighted adjacency matrix $A$, and control set $\mc K$. Assume that
  $A$ is diagonalizable by the eigenvector matrix $V$, and let
  $\lambda_\text{min} (A) < 1$. Let $\mu \in \real_{\ge 0}$, and let
  \begin{align*}
    n_\mu = \left| \setdef{\lambda}{\lambda \in \text{spec} (A),
        |\lambda | \le \mu} \right|.
  \end{align*}
  For all $T \in \mathbb{N}_{>0}$ and for all $\mu \in [
  \lambda_\text{min} (A),1)$ it holds
  \begin{align*}
    \lambda_\text{min} (\mc W_{\mc K,T}) \le \text{cond}^2 (V)\frac{ \mu^{2
        \left(\left\lceil \frac{n_\mu }{|\mc K|} \right\rceil - 1
        \right)} }{1 - \mu^2 } .
  \end{align*}
\end{theorem}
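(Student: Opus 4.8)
The plan is to bound $\lambda_\text{min}(\mathcal W_{\mc K,T})$ from above by exhibiting a single unit-norm target state $\subscr{x}{f}$ for which the minimum control energy $\subscr{x}{f}^\transpose \mathcal W_{\mc K,T}^{-1}\subscr{x}{f}$ is large, since $\lambda_\text{min}(\mathcal W_{\mc K,T}) = \min_{\|\subscr{x}{f}\|_2=1}\big(\subscr{x}{f}^\transpose \mathcal W_{\mc K,T}^{-1}\subscr{x}{f}\big)^{-1}$. Equivalently — and this is the cleaner route — I would work directly with the controllability matrix $\mc C_{\mc K,T} = \begin{bmatrix} B_{\mc K} & AB_{\mc K} & \cdots & A^{T-1}B_{\mc K}\end{bmatrix}$ and show that its smallest singular value is small, because $\lambda_\text{min}(\mathcal W_{\mc K,T}) = \sigma_\text{min}^2(\mc C_{\mc K,T})$. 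The intuition is that the column space of $\mc C_{\mc K,T}$ is built from the $m = |\mc K|$ columns of $B_{\mc K}$ propagated through powers of $A$; after diagonalizing $A = V\Lambda V^{-1}$, the "slow" eigendirections (those with eigenvalue modulus $\le\mu$) contract geometrically, so the components of the reachable set along a suitable combination of slow eigenvectors are exponentially small, and there are $n_\mu$ such directions being fed by only $m$ input channels.

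The key steps, in order. First, diagonalize: write $A = V\Lambda V^{-1}$ with $\Lambda = \operatorname{diag}(\lambda_1,\dots,\lambda_n)$, and change coordinates by $V^{-1}$, so that $\mc C_{\mc K,T}$ becomes $V\,\widetilde{\mc C}_{\mc K,T}$ where $\widetilde{\mc C}_{\mc K,T} = \begin{bmatrix}\widetilde B & \Lambda\widetilde B & \cdots & \Lambda^{T-1}\widetilde B\end{bmatrix}$ and $\widetilde B = V^{-1}B_{\mc K}$; this introduces the factor $\operatorname{cond}^2(V)$ via $\sigma_\text{min}(\mc C_{\mc K,T}) \le \|V\|_2\,\sigma_\text{min}(\widetilde{\mc C}_{\mc K,T}) \le \|V\|_2\,\|V^{-1}\|_2^{-1}\cdot(\text{something})$ — more precisely I would use $\sigma_\text{min}(\mc C_{\mc K,T})^2 = \lambda_\text{min}(V\widetilde{\mc W}V^\transpose)$ and sandwich it between $\sigma_\text{min}(V)^2\lambda_\text{min}(\widetilde{\mc W})$ and $\sigma_\text{max}(V)^2\lambda_\text{min}(\widetilde{\mc W})$, keeping the upper sandwich. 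Second, restrict attention to the $n_\mu$-dimensional "slow" eigenspace: let $S$ be the set of indices with $|\lambda_i|\le\mu$ and let $P$ be the coordinate projection onto those $n_\mu$ components. Then $P\widetilde{\mc C}_{\mc K,T}$ has entries bounded, row-block by row-block, by $\mu^\tau$ times entries of $P\widetilde B$ for the block $\Lambda^\tau\widetilde B$. Third — the combinatorial/linear-algebra heart — argue that because $P\widetilde{\mc C}_{\mc K,T}$ is an $n_\mu \times (mT)$ matrix whose column blocks decay like $1,\mu,\mu^2,\dots$, there is a unit vector $z$ in $\real^{n_\mu}$ (a left null-ish direction) with $\|z^\transpose P\widetilde{\mc C}_{\mc K,T}\|_2$ exponentially small: specifically, after grouping the $T$ column-blocks, only the first $\lceil n_\mu/m\rceil$ of them can possibly be "needed" to span an $n_\mu$-dimensional space with $m$-column blocks, so one can find $z$ orthogonal to the span of the first $\lceil n_\mu/m\rceil - 1$ blocks, whence all surviving contributions carry a factor $\mu^{\lceil n_\mu/m\rceil - 1}$ or higher; summing the geometric tail $\sum_{\tau\ge \lceil n_\mu/m\rceil-1}\mu^{2\tau}$ gives the $\dfrac{\mu^{2(\lceil n_\mu/|\mc K|\rceil - 1)}}{1-\mu^2}$ factor. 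Fourth, translate $\|z^\transpose P\widetilde{\mc C}_{\mc K,T}\|_2$ small into $\sigma_\text{min}(\widetilde{\mc C}_{\mc K,T})$ small (using that $z^\transpose P$ is a unit row vector in $\real^n$), then undo the coordinate change to pick up $\operatorname{cond}^2(V)$, and square to pass from singular values of $\mc C$ to eigenvalues of $\mathcal W$.

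The main obstacle is step three: making rigorous the claim that one can choose the test direction $z$ to annihilate the first $\lceil n_\mu/m\rceil - 1$ input-propagation blocks. Annihilating a block $P\Lambda^\tau\widetilde B$ means $z$ lies in the orthogonal complement of its $m$-dimensional column span; the union of $\lceil n_\mu/m\rceil - 1$ such spans has dimension at most $(\lceil n_\mu/m\rceil - 1)m < n_\mu$ (this strict inequality is exactly why the ceiling appears), so a nonzero $z$ exists — that part is a clean dimension count. The subtlety is controlling the tail: after imposing $z\perp(\text{first blocks})$, the residual $z^\transpose P\widetilde{\mc C}_{\mc K,T}$ consists of blocks $z^\transpose P\Lambda^\tau\widetilde B$ for $\tau\ge\lceil n_\mu/m\rceil-1$, and each such term is bounded by $\mu^\tau\|z^\transpose P\|_2\|P\widetilde B\|_2 = \mu^\tau\|\widetilde B\|_2$; one must then confirm that the constant $\|\widetilde B\|_2 = \|V^{-1}B_{\mc K}\|_2$ is absorbed correctly into $\operatorname{cond}(V)$ — note $\|B_{\mc K}\|_2 = 1$ since its columns are orthonormal canonical vectors, so $\|\widetilde B\|_2\le\|V^{-1}\|_2$, and combined with the $\|V\|_2$ from undoing the change of variables this yields precisely $\operatorname{cond}^2(V)$ after squaring. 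I would also need to double-check the edge case $n_\mu\le|\mc K|$, where $\lceil n_\mu/|\mc K|\rceil - 1 = 0$ and the bound degrades to $\operatorname{cond}^2(V)/(1-\mu^2)$, which is consistent with $\mu^0 = 1$. Everything else — the diagonalization, the geometric-series summation, the singular-value/Gramian dictionary from \eqref{eq: energy lambda min} — is routine.
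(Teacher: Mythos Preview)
Your proposal is correct and follows essentially the same route as the paper's proof: diagonalize $A=V\Lambda V^{-1}$, restrict to the $n_\mu$-dimensional slow eigenspace, use the dimension count $(\lceil n_\mu/|\mc K|\rceil-1)\,|\mc K|<n_\mu$ to find a direction annihilating the first $\lceil n_\mu/|\mc K|\rceil-1$ column blocks, bound the geometric tail by $\|V^{-1}\|_2^2\,\mu^{2(\lceil n_\mu/|\mc K|\rceil-1)}/(1-\mu^2)$, and pick up the remaining $\|V\|_2^2$ from the coordinate change. The only cosmetic difference is that the paper phrases the dimension count as ``the truncated Gramian $\sum_{\tau=0}^{T_\text{max}-1}A_1^\tau B_1B_1^\transpose A_1^\tau$ is singular'' and then invokes a Bauer--Fike/Weyl perturbation inequality to add on the tail, whereas you build the test vector $z$ explicitly; these are equivalent formulations of the same estimate.
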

\begin{proof}
  Let $V$ be an eigenvector matrix for $A$, and assume that the
  columns of $V$ are ordered such that
  \begin{align*}
    V^{-1} A V =
    \begin{bmatrix}
      A_1 & 0\\
      0 & A_2
    \end{bmatrix}, \; 
    V^{-1} B_{\mc K} =
    \begin{bmatrix}
      B_1 \\
      B_2
    \end{bmatrix},
  \end{align*}
  where $A_1 \in \real^{n_\mu \times n_\mu}$ and $A_2$ are diagonal
  matrices, and $\text{spec} (A_1) = \setdef{\lambda}{\lambda \in
    \text{spec} (A), |\lambda | \le \mu}$.  Observe that
  \begin{align*}
    \mc W_{\mc K, T} &= \sum_{\tau = 0}^{T-1} A^\tau B_{\mc K} B_{\mc
      K}^\transpose (A^\transpose)^\tau \\ 
    &= V  \underbrace{\sum_{\tau = 0}^{T-1} 
        \begin{bmatrix}
      A_1 & 0\\
      0 & A_2
    \end{bmatrix}^\tau
    \begin{bmatrix}
      B_1 \\ B_2
    \end{bmatrix}
    \begin{bmatrix}
      B_1 \\ B_2
    \end{bmatrix}^\transpose
    \begin{bmatrix}
      A_1 & 0\\
      0 & A_2
    \end{bmatrix}^\tau}_{\tilde W_{\mc K, T}}
    V^\transpose ,
  \end{align*}
  where we have used the fact that $A_1$ and $A_2$ are
  symmetric. Since $\mc W_{\mc K, T}$ is symmetric, we have
  \begin{align*}
    \lambda_\text{min} (\mc W_{\mc K, T}) &= \min_{\| x\| = 1}
    x^\transpose \mc W_{\mc K, T} x = \min_{\| x\| = 1} x^\transpose V
    \tilde{\mc W}_{\mc K, T} V^\transpose x \\
    &\le \| V\|^2_2 \min_{\| y\|=1} y^\transpose \tilde{\mc W}_{\mc K,
      T} y \\ &\le \| V\|_2^2 \min_{\| y_1\|=1} 
    \begin{bmatrix}
      y_1^\transpose & 0
    \end{bmatrix}
    \tilde{\mc W}_{\mc K, T} 
    \begin{bmatrix}
      y_1 \\ 0
    \end{bmatrix}
    \\ &= \| V\|^2_2 \lambda_\text{min} \left( \sum_{\tau = 0}^{T-1} A_1^\tau
      B_1 B_1^\transpose A_1^\tau \right) .
  \end{align*}
  Let $T_\text{max} = \left\lceil \frac{n_\mu}{|\mc K|} \right\rceil
  -1$, and notice that the matrix
  \begin{align*}
    \sum_{\tau= 0}^{T_\text{max}-1 } A_1^\tau B_1 B_1^\transpose
    A_1^\tau = \mc C_{1,T_\text{max}} \mc
    C_{1,T_\text{max}}^\transpose
  \end{align*}
  is singular, where $\mc C_{1,T_\text{max}}$ is the controllability
  matrix of $(A_1,B_1)$ at $T_\text{max}$ steps. In fact, $\mc
  C^1_{\mc K , T_\text{max}} \in \real^{n_\mu \times m}$ with
  \begin{align*}
    m = T_\text{max} |\mc K| < \left( \frac{n_\mu}{|\mc K|} + 1 \right)
    |\mc K| -|\mc K| = n_\mu .
  \end{align*}
  An application of the Bauer-Fike theorem
  \cite{RAH-CRJ:85,FLB-CTF:60} for the location of eigenvalues of
  perturbed matrices yields
  \begin{align*}
    &\lambda_\text{min} (\mc W_{\mc K, T}) \le \| V\|_2^2 \left(
      \lambda_\text{min} \left( \sum_{\tau= 0}^{T-1 }
        A_1^\tau B_1 B_1^\transpose A_1^\tau \right) \right) \\
    &\le \| V\|_2^2 \Bigg( \lambda_\text{min} \left( \sum_{\tau=
        0}^{T_\text{max}-1 } A_1^\tau B_1 B_1^\transpose A_1^\tau
    \right) \\ &\quad + \left\| \sum_{\tau= T_\text{max}}^{T-1 } A_1^\tau
      B_1 B_1^\transpose
      A_1^\tau \right\|_2 \Bigg) \\
    &\le \| V\|_2^2 \sum_{\tau = T_\text{max}}^{T-1} \| A_1 \|^{2 \tau}_2
    \| B_1 \|^2_2
    \le \underbrace{\| V\|_2^2 \| V^{-1}\|_2^2}_{\text{cond}^2 (A)}
    \frac{ \mu^{2 \left(\left\lceil \frac{n_\mu }{|\mc K|}
          \right\rceil - 1 \right)} }{1 - \mu^2 },
  \end{align*}
  where we have used the facts that $A_1$ is diagonal,
  $\lambda_\text{max} (A_1) \le \mu$, and $\| B_1\|_2 \le \| V^{-1}
  B_{\mc K}\|_2 \le \| V^{-1}\|_2$.
\end{proof}

In Theorem \ref{thm:upper bound energy unstable networks} we provide
an upper bound on the smallest eigenvalue of the controllability
Gramian or, equivalently, a lower bound on the worst-case energy
needed to control a network to an arbitrary target state, as a
function of the eigenvalues distribution of $A$ and the condition
number of the set of its eigenvectors. The bound in Theorem
\ref{thm:upper bound energy unstable networks} needs to be regarded as
a performance limitation: independently of the control strategy
adopted by the control nodes, the least amount of energy needed to
steer the network to an arbitrary unit-norm state is bounded by the
inverse of the expression in Theorem \ref{thm:upper bound energy
  unstable networks}. Notice that Theorem \ref{thm:upper bound energy
  unstable networks} contains a family of bounds, because equation
\eqref{eq: bound energy} holds for all values $\mu \in
[\lambda_\text{min} (A), 1 )$. Finally, equation \eqref{eq: bound
  energy} simplifies when $A$ is normal (in particular when $A$ is
symmetric) due to the existence of an orthonormal eigenvector matrix
$V$ yielding $\text{cond} (V) = 1$. Indeed, in the case of stable and
symmetric networks simpler and sharper bounds can be obtained as a
corollary of Theorem \ref{thm:upper bound energy unstable
  networks}. Recall that a matrix $M$ is Schur stable if
$\lambda_\text{max} (M) < 1$ \cite{RAH-CRJ:85}.

\begin{corollary}{\bf \emph{(Control energy and number of control
      nodes for stable and symmetric networks)}}\label{thm:upper bound
    energy}
  Consider a network $\mc G = (\mc V, \mc E)$ with $|\mc V| = n$,
  weighted adjacency matrix $A$, and control set $\mc K$. Assume that
  $A$ is Schur stable and symmetric. For all $T \in \mathbb{N}_{>0}$
  it holds
  \begin{align}\label{eq: bound energy}
    \lambda_\text{min} (\mc W_{\mc K,T}) \le \min \left\{ \frac{1 -
        \lambda_\text{min}^{2T} (A) }{1 - \lambda_\text{min}^2 (A) }, 
\frac{ \lambda_\text{max}^{2
        \left( \left\lceil \frac{n}{|\mc K|} \right\rceil -1 \right) }(A)
    }{1 - \lambda_\text{max}^2 (A) } \right\}.
  \end{align}
\end{corollary}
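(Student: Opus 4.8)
The plan is to establish the two terms appearing in the minimum separately: if $\lambda_\text{min}(\mc W_{\mc K,T})$ is bounded above by each of them, then it is bounded above by their minimum, and since both bounds hold for every $T$ the claim follows.

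The second term is just Theorem~\ref{thm:upper bound energy unstable networks} specialized to the present hypotheses. Since $A$ is Schur stable, $\lambda_\text{max}(A)<1$, hence $\lambda_\text{min}(A)\le\lambda_\text{max}(A)<1$, so the standing assumption $\lambda_\text{min}(A)<1$ of that theorem holds and $\mu=\lambda_\text{max}(A)$ is an admissible value in the interval $[\lambda_\text{min}(A),1)$. For this choice every eigenvalue of $A$ has modulus at most $\mu$, so $n_\mu=n$. Moreover, a symmetric $A$ admits an orthonormal eigenvector matrix $V$, whence $\text{cond}(V)=1$. Substituting these values into the bound of Theorem~\ref{thm:upper bound energy unstable networks} returns exactly $\lambda_\text{max}^{2(\lceil n/|\mc K|\rceil-1)}(A)/(1-\lambda_\text{max}^2(A))$.

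For the first term I would test the Rayleigh quotient of $\mc W_{\mc K,T}$ against an eigenvector of $A$ associated with the smallest-modulus eigenvalue. Let $\lambda\in\text{spec}(A)$ with $|\lambda|=\lambda_\text{min}(A)$ and let $v$ be a corresponding unit eigenvector; then $A^\tau v=\lambda^\tau v$ and
\begin{align*}
  \lambda_\text{min}(\mc W_{\mc K,T})
  &\le v^\transpose \mc W_{\mc K,T}\, v
  = \sum_{\tau=0}^{T-1}\|B_{\mc K}^\transpose A^\tau v\|_2^2 \\
  &= \Big(\sum_{\tau=0}^{T-1}\lambda^{2\tau}\Big)\,\|B_{\mc K}^\transpose v\|_2^2 .
\end{align*}
Because the columns of $B_{\mc K}$ are distinct canonical vectors, $\|B_{\mc K}^\transpose v\|_2^2=\sum_{i\in\mc K}v_i^2\le\|v\|_2^2=1$, and since $|\lambda|<1$ the geometric sum equals $(1-\lambda_\text{min}^{2T}(A))/(1-\lambda_\text{min}^2(A))$, which is precisely the first term.

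Both computations are elementary. The only point that requires care is verifying that Schur stability licenses the application of Theorem~\ref{thm:upper bound energy unstable networks} with $\mu=\lambda_\text{max}(A)$ and that this $\mu$ is admissible, which is immediate as noted above; so I do not anticipate a genuine obstacle — the corollary is essentially a bookkeeping specialization of the previous theorem together with a single short test-vector estimate for the second, $\mc K$-independent bound.
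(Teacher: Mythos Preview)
Your argument is correct. For the second term you specialize Theorem~\ref{thm:upper bound energy unstable networks} exactly as the paper does. For the first term, however, you take a different route: you bound the Rayleigh quotient of $\mc W_{\mc K,T}$ at a unit eigenvector $v$ of $A$ associated with the smallest-modulus eigenvalue, and then use $\|B_{\mc K}^\transpose v\|_2^2\le 1$. The paper instead observes that $\mc W_{\mc V,T}=\mc W_{\mc K,T}+\mc W_{\mc V\setminus\mc K,T}$ with both summands positive semidefinite, hence $\lambda_\text{min}(\mc W_{\mc K,T})\le\lambda_\text{min}(\mc W_{\mc V,T})$, and then evaluates $\lambda_\text{min}\big(\sum_{\tau=0}^{T-1}A^{2\tau}\big)$ directly. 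Both arguments are short and yield the identical bound; your test-vector computation is perhaps the more elementary of the two, while the paper's version carries the structural message that enlarging the control set can only increase the Gramian in the positive-semidefinite order.
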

\begin{proof}
  We start by showing the first part of the inequality. Notice that
  $B_{\mc V} = [B_{\mc K} \; B_{\mc V \setminus \mc K}]$, and that
  $\mc W_{\mc V ,T} = \mc W_{\mc K ,T} + \mc W_{\mc V \setminus \mc
    K,T}$. Since both $\mc W_{\mc K ,T}$ and $\mc W_{\mc V \setminus
    \mc K,T}$ are positive semi-definite, we conclude that
  $\lambda_\text{min} (\mc W_{\mc K,T}) \le \lambda_\text{min} (\mc
  W_{\mc V ,T})$. Then,
  \begin{align*}
    \lambda_\text{min} (\mc W_{\mc V ,T}) = \lambda_\text{min}
    \left( \sum_{\tau = 0}^{T-1} A^{2 \tau} \right) = 
    \frac{1 -  \lambda_\text{min} (A)^{2T} }{1 - \lambda_\text{min}
      (A)^2 } ,
  \end{align*}
  where we have used the assumption $A = A^\transpose$. The second
  part of the inequality follows from Theorem \ref{thm:upper bound
    energy unstable networks} with $\mu = \lambda_\text{max} < 1$, and
  the fact that symmetric matrices admit an orthonormal eigenvector
  matrix $V$, so that $\text{cond} (V) = 1$.
\end{proof}

  \begin{figure}
    \centering
    \includegraphics[width=.9\columnwidth]{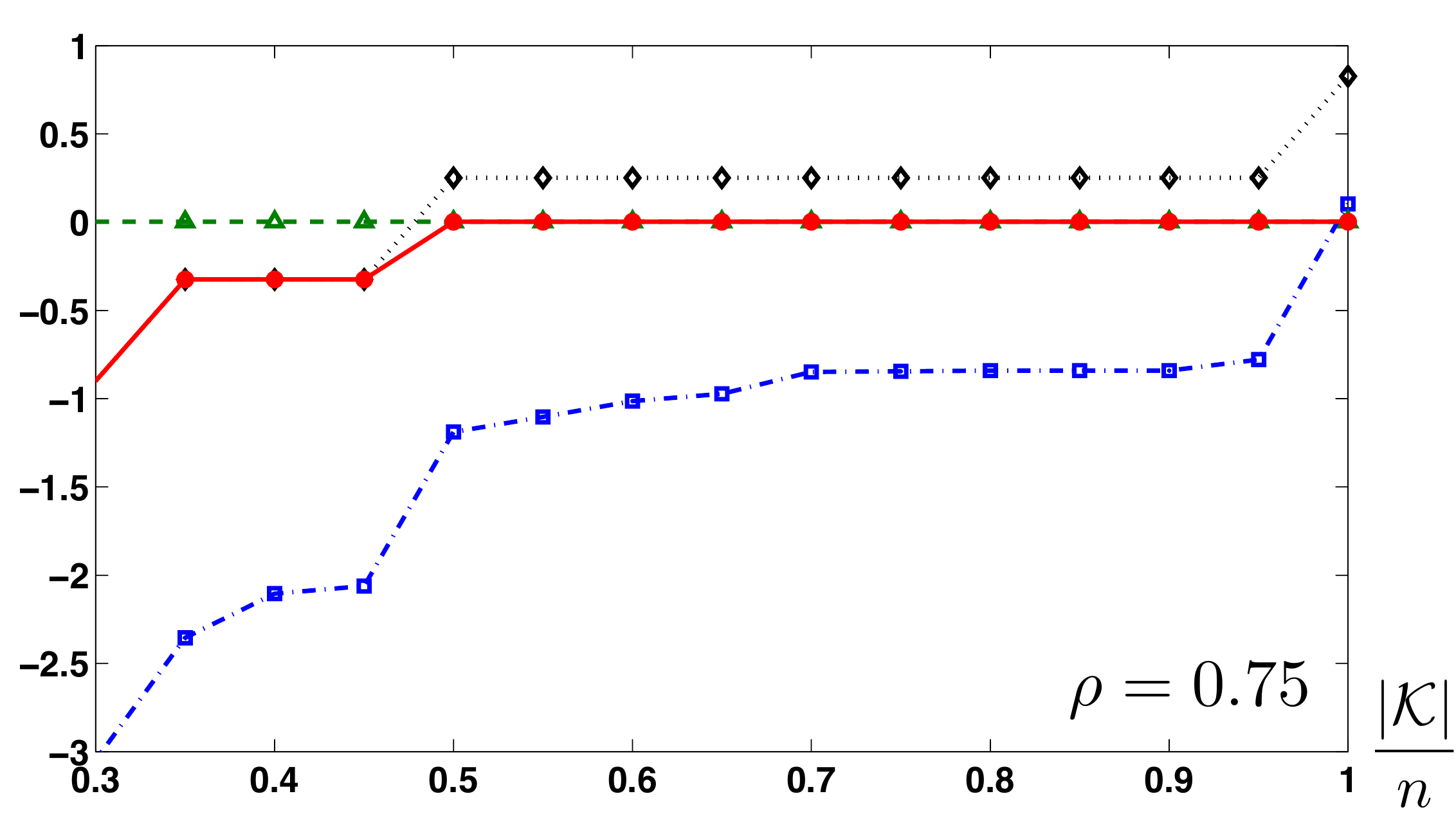}
    \caption{For the network in Example \ref{example: bound tight},
      this figure compares (in a logarithmic scale) the upper bound
      \eqref{eq: bound energy} (solid red) with the largest
      $\lambda_\text{min}$ of the controllability Gramian (dashed-dot
      blue) over all possible sets $\mc K$. For each value $|\mc K|$
      from $1$ to $n$, a combinatorial search determines the value
      $\lambda_\text{min}^* = \max_{\mc K} \lambda_\text{min} (\mc
      W_{\mc K,\infty})$. The two quantities in the right hand side of
      equation \eqref{eq: bound energy} are also reported in dashed
      green and dotted black, respectively. It can be shown that the
      bound \eqref{eq: bound energy} tends to be conservative as the
      $\rho$ increases.}
    \label{fig:UpperBoundCircle}
\end{figure}

\begin{example}{\bf \emph{(Tightness of the bound in Theorem
      \ref{thm:upper bound energy unstable
  networks})}}\label{example: bound tight}
Consider a network with $n = 20$ nodes and adjacency matrix
  \begin{align*}
    A := \frac{\rho}{3}
    \begin{bmatrix}
      1 & 1 & 0 & \cdots & 1\\
      1 & 1 & 1 & \cdots & 0\\
      \vdots & \ddots & \ddots & \ddots & \vdots\\
      0 & \cdots & 1 & 1 & 1\\
      1 & \cdots & 0 & 1 & 1
    \end{bmatrix},
  \end{align*}
  where $\rho \in
  \real_{>0}$. 
  In Fig. \ref{fig:UpperBoundCircle} we select $\rho = 0.75$ and
  compare the upper bound in Corollary \ref{thm:upper bound energy}
  with the value $\max \setdef{\lambda_\text{min} (\mc W_{\mc
      K,T})}{\mc K \subseteq \until{12}, |\mc K| = k}$, as a function
  of the number $k$ of control nodes.  \oprocend
\end{example}

In what follows we consider two asymptotic control scenarios, where
the network cardinality grows, and either the number of control nodes
or the desired control energy remain constant. From Corollary
\ref{thm:upper bound energy} we conclude that for stable and symmetric
networks, if $|\mc K|$ is constant, then the controllability energy
$\lambda_\text{min}^{-1} (\mc W_{\mc K,T})$ grows at least
exponentially as the cardinality $n$ grows.
This reasoning provides a quantitative measure of the findings in
\cite{JS-AEM:13}, and it is in accordance with
\cite{ACA-DCS-YZ:02}. We next consider the case of bounded control
energy.

\begin{corollary}{\bf \emph{(Lower bound on the cardinality of the
      control set)}}\label{corollary: bounded energy}
  Consider a network $\mc G = (\mc V, \mc E)$ with $|\mc V| = n$,
  weighted adjacency matrix $A$, and control set $\mc K$. Let $T \in
  \real_{> 0}$ and $\varepsilon \in \real_{> 0}$.
  If $\lambda_\text{min} (\mc W_{\mc K,T}) \ge \varepsilon$, then
  \begin{align*}
    |\mc K| \ge R_\varepsilon   n_\mu,
  \end{align*}
  where $\mu$, $n_\mu$ and $V$ are as in Theorem \ref{thm:upper bound
    energy unstable networks}, and
  \begin{align*}
    R_\varepsilon := \frac{2 \log( \mu )}{ \log(\varepsilon) + \log (\mu^2
      (1- \mu^2) ) - 2\log( \text{cond} (V) )  
    } .
  \end{align*}
\end{corollary}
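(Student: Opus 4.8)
The plan is to obtain Corollary~\ref{corollary: bounded energy} as a direct inversion of the bound in Theorem~\ref{thm:upper bound energy unstable networks}: that theorem upper-bounds $\lambda_\text{min}(\mc W_{\mc K,T})$ by a quantity that decreases as $|\mc K|$ grows, so the hypothesis $\lambda_\text{min}(\mc W_{\mc K,T})\ge\varepsilon$ forces $|\mc K|$ to be large. Since $\mu\in[\lambda_\text{min}(A),1)$, $n_\mu$ and $V$ are precisely the objects appearing in Theorem~\ref{thm:upper bound energy unstable networks}, I would first combine that theorem (for the given $T\in\mathbb{N}_{>0}$) with the assumption $\lambda_\text{min}(\mc W_{\mc K,T})\ge\varepsilon$ to get
\begin{align*}
  \varepsilon \;\le\; \text{cond}^2(V)\,\frac{\mu^{\,2\left(\left\lceil \frac{n_\mu}{|\mc K|}\right\rceil-1\right)}}{1-\mu^2}, \qquad\text{equivalently}\qquad \mu^{\,2\left(\left\lceil \frac{n_\mu}{|\mc K|}\right\rceil-1\right)} \;\ge\; \frac{\varepsilon\,(1-\mu^2)}{\text{cond}^2(V)} .
\end{align*}

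Next I would take logarithms of the right-hand inequality, keeping in mind that $\log(\mu)<0$ because $\mu<1$, so that dividing both sides by $2\log(\mu)$ reverses the inequality. This gives $\left\lceil n_\mu/|\mc K|\right\rceil-1 \le \log\!\big(\varepsilon(1-\mu^2)/\text{cond}^2(V)\big)/(2\log(\mu))$; then, writing $1 = 2\log(\mu)/(2\log(\mu))$ and collecting logarithms (using $2\log(\mu)=\log(\mu^2)$ and $2\log(\text{cond}(V))=\log(\text{cond}^2(V))$),
\begin{align*}
  \left\lceil \frac{n_\mu}{|\mc K|}\right\rceil \;\le\; \frac{\log(\varepsilon) + \log\!\big(\mu^2(1-\mu^2)\big) - 2\log(\text{cond}(V))}{2\log(\mu)} \;=\; \frac{1}{R_\varepsilon}.
\end{align*}
Finally, using $n_\mu/|\mc K|\le \left\lceil n_\mu/|\mc K|\right\rceil$ and inverting yields $|\mc K|\ge R_\varepsilon\,n_\mu$, as claimed.

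The one point that needs care — and essentially the only obstacle — is to check that the final inversion $\lceil n_\mu/|\mc K|\rceil \le 1/R_\varepsilon \Rightarrow |\mc K| \ge R_\varepsilon n_\mu$ is legitimate, i.e.\ that $R_\varepsilon>0$ (equivalently, that the denominator of $R_\varepsilon$ is negative). This is automatic under the hypothesis: since $\lceil n_\mu/|\mc K|\rceil \ge 1$ we have $\mu^{\,2(\lceil n_\mu/|\mc K|\rceil-1)}\le 1$, so the displayed inequality forces $\varepsilon(1-\mu^2)/\text{cond}^2(V)\le 1$; combined with $\mu^2<1$ and $\text{cond}(V)\ge 1$ this makes $\log(\varepsilon) + \log(\mu^2(1-\mu^2)) - 2\log(\text{cond}(V))$ strictly negative, and dividing it into the negative number $2\log(\mu)$ gives $R_\varepsilon>0$. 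Everything else is routine algebra on the bound of Theorem~\ref{thm:upper bound energy unstable networks}.
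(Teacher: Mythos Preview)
Your proposal is correct and follows essentially the same approach as the paper: both start from the bound of Theorem~\ref{thm:upper bound energy unstable networks}, rearrange to $\varepsilon(1-\mu^2)\,\text{cond}^{-2}(V)\le \mu^{2(\lceil n_\mu/|\mc K|\rceil-1)}$, and then invert to obtain $|\mc K|\ge R_\varepsilon n_\mu$. The paper's proof simply asserts that this last step is an equivalence, whereas you have filled in the log-taking, the sign bookkeeping (noting $\log\mu<0$ reverses the inequality), and the verification that $R_\varepsilon>0$; these details are correct and make the argument complete.
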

\medskip
\begin{proof}
  From Theorem \ref{thm:upper bound energy unstable networks} it
  follows that $\lambda_\text{min} (\mc W_{\mc K,T}) \ge \varepsilon$
  only if
  \begin{align*}
    \varepsilon \left( 1 - \mu^2 \right) \text{cond}^{-2} (V)\le
    \mu^{2 \left( \lceil n_\mu / |\mc K| \rceil - 1 \right)} ,
  \end{align*}
  or, equivalently, only if $|\mc K| \ge R_\varepsilon n_\mu$.
\end{proof}

The previous result has interesting consequences. For instance for
stable and symmetric networks, Corollary \ref{corollary: bounded
  energy} with $\mu = \lambda_\text{max}$ and $\text{cond} (V) = 1$
implies that, in order to guarantee a certain bound on the control
energy, the number of control nodes must be a linear function of the
total number of nodes. Instead, classic controllability
\cite{REK-YCH-SKN:63,YYL-JJS-ALB:11} is (generically) ensured by the
presence of a single control node, independently of the network
dimension \cite[Theorem 14.2]{KJR:88}, \cite[Theorem 1]{JMD-CC-JW:03}.
In fact, Corollary \ref{corollary: bounded energy} can also be used to
show that a similar behavior might appear also for unstable and/or
asymmetric networks. We show this fact through two examples.

\begin{example}{\bf \emph{(Control nodes for circulant
     marginally stable network)}}\label{example: circulant}
 Consider the circulant network in Example \ref{example: bound tight}
 with $\rho = 1$ and $n$ nodes. Let $\mu = 1/3$, and let $\varepsilon$
 be a desired lower bound for the smallest eigenvalue of the
 controllability Gramian. From Corollary \ref{corollary: bounded
   energy} the number of control nodes satisfies
  \begin{align*}
    |\mc K| \ge \frac{\log(3) }{\log(1/\varepsilon) + \log(81/8) }   (n-1),
  \end{align*}
  where we have used that, for $\mu=1/3$, $n_\mu \ge \frac{1}{2} (n -1)$.
  \oprocend
\end{example}

\begin{example}{\bf \emph{(Bound for asymmetric line
      network)}}\label{example: bound asymmetric}
  Consider a network with $n$ nodes and weighted adjacency matrix
    \begin{align*}
    A := \frac{1}{3}
    \begin{bmatrix}
      1 & 1/2 & 0 & 0&\cdots \\
      2 & 1 & 2 &0& \cdots \\
      0& 1/2 & 1 & 1/2 & \cdots \\
      0& 0& 2 & 1  & \ddots \\
      \vdots & \vdots & \vdots & \ddots & \ddots &\\
    \end{bmatrix}.
  \end{align*}
  Define the diagonal matrix $D = \text{diag} (1, 2,1,2,\dots )$.  It
  can be verified that $D^{-1} A D $ is symmetric. Let $\tilde V$ be
  an orthonormal eigenvector matrix of $D^{-1} A D $, and notice that
  $D \tilde V$ is an eigenvector matrix of $A$ with $\text{cond} (D
  \tilde V)=2$. Notice moreover that the eigenvalues of $A$ are
  \begin{align*}
    \lambda_h=\frac{1}{3}\left(1+2\cos\frac{h\pi}{n+1}\right),\qquad
    h\in\{1,\ldots,n\},
  \end{align*}
  so that for $\mu=1/3$ it holds $n_\mu\ge n/2$. Then, Corollary
  \ref{corollary: bounded energy} implies that the number of control
  nodes satisfies
  \begin{align*}
    |\mc K| \ge \frac{\log(3) }{\log(1/\varepsilon) + \log(81/2) }   n.
  \end{align*}
  \oprocend
\end{example}

We remark that the technique used in the previous examples can be used
to determine bounds on the number of control nodes in more general
unstable and marginally stable networks with known eigenvalues
distribution, such as the case of consensus dynamics over random
geometric networks \cite{SB-AG-BP-DS:06}.

\begin{remark}{\bf \emph{(Observability of Complex
      Networks)}}\label{remark: observability}
  The observability problem of complex networks consists of selecting
  a set of sensor nodes, and designing an estimation strategy to
  reconstruct the network state from measurements collected by the
  sensor nodes \cite{YYL-JJS-ALB:13}. Our quantitative analysis of the
  controllability of complex networks in Section \ref{sec: bounds},
  and our decoupled control strategy in Section \ref{sec: control} can
  be directly applied to the problem of observability of complex
  networks. To see this, define the \emph{$T$-steps observability
    Gramian} by
\begin{align*}
  \mc O_{\mc K , T} := \sum_{\tau = 0}^{T-1}(A^\transpose)^\tau C_{\mc
    K}^\transpose C_{\mc K} A^\tau,
\end{align*}
where $\mc K$ denotes the set of sensor nodes, and $C_{\mc K} :=
B_{\mc K}^\transpose$. The energy associated with the network state
$x$ with sensor nodes $\mc K$ and observation horizon $T$ is
\begin{align*}
  \text{E} (x,T) := \sum_{\tau = 0}^{T-1} \| y_{\mc K}(\tau) \|_2^2 =
  x^\transpose \mc O_{\mc K , T} x \ge \lambda_\text{min} (\mc O_{\mc
    K , T}),
\end{align*}
where $\map{y_{\mc K}}{\mathbb{N}_{\ge 0}}{\real}$ contains the
measurements taken by the observing nodes $\mc K$ \cite{DG:95}. Thus,
the smallest eigenvalue of the observability Gramian is a suitable
metric to measure observability of a network. 
The results in Section \ref{sec: bounds} are readily applicable to the
network observability problem. For instance, from Corollary
\ref{thm:upper bound energy} we conclude that the smallest eigenvalue
of the observability Gramian of a stable and symmetric network
decreases exponentially as the ratio of the network cardinality and
the number of sensor nodes grows. \oprocend
\end{remark}

\section{Decoupled Control of Complex Networks}\label{sec: control}
In this section we provide a solution to the problem of controlling a
complex network, that is, the problems of both selecting the control
nodes, and designing a distributed control law to drive the network to
a target state. Our approach is different from classic solutions, as
it exploits the network structure to jointly select the control nodes
and to design an open-loop control law amenable to distributed
implementation.

The problem of selecting control nodes in a dynamical system to
optimize a controllability metric is a classic control problem
\cite{MVDW-BDJ:01}. Most existing solutions either rely on
combinatorial or non-scalable optimization techniques, being therefore
not suited for large networks \cite{LSP-KRK:99}, or are heuristic, in
that they exploit the specific structure of the system at hand, and do
not offer guarantees on the control energy
\cite{AMAH-AHN:89,MVDW-BDJ:01,SSR-TSP-VBV:91,KBL:92}. See Remark
\ref{remark:selection within cluster} for a heuristic method to select
control nodes.


\subsection{Setup and definition of the decoupled control
  strategy}\label{sec: setup decoupling law}
Our open-loop \emph{decoupled control strategy} can be divided into
three parts: (i) network partitioning, (ii) selection of the control
nodes, and (iii) definition of the decoupled control law.

\noindent
\textbf{Network partitioning} Consider an undirected network $\mc G :=
(\mc V, \mc E)$ with weighted adjacency matrix $A :=
[a_{ij}]$. Partition $\mc V$ into $N$ disjoint sets $\mc P :=\{\mc
V_1,\dots, \mc V_N\}$, and let $\mc G_{i} := (\mc V_i ,
\mathcal{E}_i)$ be the $i$-th subgraph of $\mc G$ with vertices $\mc
V_i$ and edges $\mathcal{E}_i := \mathcal{E} \cap (\mc V_i \times \mc
V_i)$.\footnote{Several methods are available to partition a network
  \cite{SF:10}. For the implementation of our decoupled control law it
  is only required that the network is partitioned into strongly
  connected components. The performance of the decoupled control law
  depend on the partitioning scheme, and it remains an outstanding
  problem to design an optimal partitioning algorithm for the
  implementation of the proposed control law. In Section \ref{sec:
    nodes selection} we employ a spectral method based on the Fiedler
  eigenvector to partition a network.} According to this partition,
and possibly after relabeling states and inputs, the network matrices
read as
\begin{align}\label{eq: decomposition}
  A =
  \begin{bmatrix}
    A_{1} & \cdots & A_{1N}\\
    \vdots & \vdots & \vdots\\
    A_{N1} & \cdots & A_{N}
  \end{bmatrix}
  ,\;
  B_{\mc K} =
  \begin{bmatrix}
    B_{\mc K_1} & \cdots & 0\\
    \vdots & \ddots & \vdots\\
    0 & \cdots & B_{\mc K_N}
  \end{bmatrix}
  ,
\end{align}
where $\mc K_i \subseteq \mc V_i$ for all $i \in \until{N}$, and the
networks dynamics can be written as the interconnection of $N$
subsystems of the form
\begin{align}\label{eq: coupled system}
  x_i (t+1) = A_i x_i(t) + \sum_{j \in \mc N_i} A_{ij} x_j (t) +
  B_{\mc K_i} u_{\mc K_i} (t) ,
\end{align}
where $i \in \until{N}$ and $\mc N_i := \setdef{j}{ A_{ij} \neq 0}$.


\noindent
\textbf{Selection of the control nodes} For a network $\mc G := (\mc
V, \mc E)$ with partition $\mc P := \{ \mc V_1, \dots, \mc V_N \}$, we
say that a node $i \in \mc V_k$ is a \emph{boundary} node if $a_{ij}
\neq 0$ for some node $j \in \mc V_\ell$, with $k,\ell \in \until{N}$
and $k \neq \ell$. Let $\Psi_i \subseteq \mc V_i$ be the set of
boundary nodes of the $i$-th cluster, and let $\Psi = \bigcup_{i =
  1}^N \Psi_i$ be the set of all the boundary nodes of the partition
$\mc P$. We select the set of control nodes $\mc K = \mc K_1\cup
\cdots \cup \mc K_N$ to satisfy $\Psi_i \subseteq \mc K_i \subseteq
\mc V_i$ for all $i \in \until{N}$, and so that each pair $(A_i,B_i)$
is controllable. See Fig.  \ref{fig:circle} for an example.

We remark that the set of boundary nodes may not be sufficient to
guarantee controllability of each pair $(A_i, B_i)$. However, if each
cluster is connected and every diagonal entry of the network matrix is
nonzero then, due to genericity of the controllability property
\cite[Theorem 14.2]{KJR:88}, \cite[Theorem 1]{JMD-CC-JW:03}, each
cluster and the whole network are generically controllable by the
boundary nodes. For networks where generic controllability is not
sufficient, existing graph-theoretical algorithms can be used to
select extra control nodes to ensure controllability
\cite{KJR:88,YYL-JJS-ALB:11,AR-MJ-MM-ME:09}.

\noindent \textbf{The decoupled control law} For a network $\mc G :=
(\mc V, \mc E)$ with partition $\mc P := \{ \mc V_1, \dots, \mc V_N
\}$, let $\subscr{x}{f}^\transpose :=
\begin{bmatrix}
  x_{\text{f} 1}^\transpose &
  \cdots &
  x_{\text{f} N}^\transpose
\end{bmatrix}$ be the target state, where $\| \subscr{x}{f} \|_2 =
1$, 
and $x_{\text{f} i} \in \real^{|\mc V_i|}$ for $i \in \until{N}$.  Let
$\| x_{\text{f},i} \|_2 = \alpha_i $, and notice that $\sum_{i = 1}^N
\alpha_i^2 = 1$. Define the control input $u_{\mc K_i}$ by
\begin{align}\label{eq:decoupled law}
  u_{\mc K_i} (t) &:= \underbrace{B_{\mc K_i}^\transpose
    (A_i^\transpose)^{T - t - 1} \mc W_{i,T}^{-1} x_{\text{f}i}}_{v_i
    (t)} - \sum_{j \in \mc N_i} \underbrace{B_{\mc K_i}^\transpose
    A_{ij} x_j (t)}_{f_{ij} (t)} ,
\end{align}
where, with a slight abuse of notation, $\mc W_{i,T}$ is the $i$-th
controllability Gramian defined by
\begin{align*}
  \mc W_{i,T} := \sum_{\tau = 0}^{T-1} A_i^{T - \tau -1} B_{\mc K_i}
  B_{\mc K_i}^\transpose (A_i^\transpose)^{T - \tau -1},
\end{align*}
and the control horizon $T$ is chosen large enough so that $\mc
W_{i,T}$ is positive definite for all $i \in \until{N}$. We refer to
the above control law as to the \emph{decoupled control law}.

Before analyzing the performance of our decoupled control law we
discuss its implementation properties. First, notice that the control
input $u_{\mc K_i}$ is the sum of an open-loop control signal $v_i$,
and a feedback control signal $\sum_{j \in \mc N_i} f_{ij}$. Second,
if each cluster is equipped with a control center, then our decoupled
control law can be implemented via distributed computation by the
control centers. In fact, the control signal $v_i$ depends on the
dynamics of only the $i$-th cluster, and the feedback control signals
$f_{ij}$ can be determined upon communication of the $i$-th control
center with its neighboring control centers $\mc N_i$. Third, our
decoupled control law is scalable, in the sense that the complexity of
the control law does not depend upon the network cardinality, but only
on its partition. We further discuss this property in Section
\ref{section:example circle} and Section
\ref{sec:simulations}. Finally, the decoupled control strategy relies
on an open-loop mechanism. As such, the application of the decoupled
control strategy to real networks would require the presence of a
feedback mechanism to account for unmodeled dynamics and noise in the
system dynamics and measurements.

\subsection{Analysis of the decoupled control law}\label{sec: analysis
decoupled}
We start our analysis by noticing that the decoupled control law
\eqref{eq:decoupled law} steers the network to the target state
$\subscr{x}{f}$. In fact, from equation \eqref{eq: coupled system} and
the definition of $f_{ij}$ in equation \eqref{eq:decoupled law}, the
network dynamics with decoupled control law can be written as the
collection of $N$ decoupled subsystems
\begin{align}\label{eq:decoupled system}
  x_i (t+1) = A_i x_i (t) + B_{\mc K_i} v_i (t) , \; i \in \until{N} .
\end{align}
Since $v_i$ in equation \eqref{eq:decoupled law} equals the minimum
energy input to drive the $i$-th subsystem \eqref{eq:decoupled system}
from $x_i (0) = 0$ to $x_i (T) = x_{\text{f}i}$, we conclude that
$x(T) = \subscr{x}{f}$.

We next study the energy properties of our decoupled control law.
Observe that the state evolution of the \mbox{$i$-th} cluster can be
written as \begin{align*} x_i (t) = \sum_{\tau = 0}^{t - 1} A_i^{t -
    \tau - 1} B_{\mc K_i} B_{\mc K_i}^\transpose (A_i^\transpose)^{t -
    \tau - 1} \mc W_{i , T}^{-1} x_{\text{f} i} .
\end{align*}

In this work we assume that the matrix $A_i$ is Schur stable for all
$i \in \until{N}$, and we leave the case of unstable networks as the
subject of future investigation.
Observe that, if $A$ is Schur stable and nonnegative, then each matrix
$A_i$ is Schur stable and $\lambda_\text{max}(A_i)\le
\lambda_\text{max}(A)$. We define
the \emph{local energy matrix}
$\Lambda \in \real^{N \times N}$ and the \emph{$\mc L_2$ gains matrix}
$\Gamma \in \real^{N \times N}$ by
\begin{align}\label{eq:gain1}
    \Lambda &:= \text{diag} (\lambda_\text{min}^{-1} (\mc W_{1,T}) , \dots ,
    \lambda_\text{min}^{-1} (\mc W_{N,T})) ,\\[.5em]
    \Gamma &:=
    \begin{bmatrix}
      1 & \gamma_{12} & \cdots & \gamma_{1N} \\
      \gamma_{21} & 1 & \cdots &  \gamma_{2N}\\
      \vdots & \vdots & \ddots & \vdots \\
      \gamma_{N1} & \gamma_{N2} & \ddots & 1
    \end{bmatrix}\label{eq:gain2}
    ,
\end{align}
where $\gamma_{ij}$, for $i,j \in \until{N}$ and $i \neq j$, is the
$\mc L_2$ gain of the input-output system $(A_j, B_{\mc K_j}, B_{\mc
  K_i}^\transpose A_{ij})$ or, equivalently, the $H_\infty$ gain of
the transfer matrix $B_{\mc K_i}^\transpose A_{ij}(zI-A_j)^{-1}B_{\mc
  K_j}$ \cite{SS-IP:05}.

\begin{theorem}{\bf \emph{(Energy of the decoupled control
      law)}}\label{thm: energy decoupled}
  Consider a network $\mc G = (\mc V, \mc E)$ with weighted adjacency
  matrix $A$, control set $\mc K$, and partition $\mc P$. Assume that
  $\mc K$ contains all boundary nodes of $\mc P$, every $A_i$ is
  stable, and every pair $(A_i,B_i)$ is controllable, where $A_i$ and
  $B_i$ are the submatrices associated with the partition $\mc P$. The
  decoupled control law $u_{\mc K}^\text{d}$ with control horizon $T$
  satisfies
  \begin{align}\label{eq:bound-input-dec}
    \text{E} ( u_{\mc K}^\text{d}, T) \le \| \Gamma \Lambda^{1/2} \|_2^2,
  \end{align}
  where $\Lambda$ and $\Gamma$ are the local energy matrix and the
  $\mc L_2$ gains matrix defined in \eqref{eq:gain1} and \eqref{eq:gain2},
  respectively.  \smallskip
\end{theorem}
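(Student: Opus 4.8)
The plan is to exploit the decomposition in \eqref{eq:decoupled system}: under the decoupled control law the closed loop splits into the $N$ independent subsystems $x_i(t+1) = A_i x_i(t) + B_{\mc K_i} v_i(t)$, each with zero initial condition. This lets me bound the injected energy cluster by cluster and then assemble a single vector inequality. Concretely, I would start from the split $u_{\mc K_i}(t) = v_i(t) - \sum_{j \in \mc N_i} f_{ij}(t)$ in \eqref{eq:decoupled law} and apply the triangle inequality for the signal norm, obtaining $\| u_{\mc K_i} \|_{2,T} \le \| v_i \|_{2,T} + \sum_{j \in \mc N_i} \| f_{ij} \|_{2,T}$, and then bound $\text{E}(u_{\mc K}^{\text{d}},T) = \sum_{i=1}^N \| u_{\mc K_i} \|_{2,T}^2$ by bounding each term on the right.

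For the feedforward term, $v_i$ is by construction the minimum-energy input steering the $i$-th decoupled subsystem \eqref{eq:decoupled system} from the origin to $x_{\text{f}i}$ in $T$ steps, so, arguing exactly as in \eqref{eq: energy lambda min}, $\| v_i \|_{2,T}^2 = x_{\text{f}i}^\transpose \mc W_{i,T}^{-1} x_{\text{f}i} \le \lambda_\text{min}^{-1}(\mc W_{i,T})\,\alpha_i^2$, i.e. $\| v_i \|_{2,T} \le \lambda_\text{min}^{-1/2}(\mc W_{i,T})\,\alpha_i = [\Lambda^{1/2}]_{ii}\,\alpha_i$. For the coupling terms I would observe that, along the closed loop, $f_{ij}(t) = B_{\mc K_i}^\transpose A_{ij} x_j(t)$ is precisely the output of the LTI system $(A_j, B_{\mc K_j}, B_{\mc K_i}^\transpose A_{ij})$ driven by $v_j$ from zero initial condition; since $A_j$ is Schur stable, this system has finite induced $\ell_2$ gain equal to $\gamma_{ij}$ (the $H_\infty$ norm of $B_{\mc K_i}^\transpose A_{ij}(zI-A_j)^{-1}B_{\mc K_j}$), and since it is strictly causal and $v_j$ is supported on $\{0,\dots,T-1\}$ we get $\| f_{ij} \|_{2,T} \le \gamma_{ij} \| v_j \|_{2,T} \le \gamma_{ij}\,\lambda_\text{min}^{-1/2}(\mc W_{j,T})\,\alpha_j$. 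Using the conventions built into the definition of $\Gamma$, namely $\gamma_{ii}=1$ and $\gamma_{ij}=0$ for $j \notin \mc N_i$ (consistent with $A_{ij}=0$), the two estimates combine to $\| u_{\mc K_i} \|_{2,T} \le \sum_{j=1}^N [\Gamma]_{ij}[\Lambda^{1/2}]_{jj}\,\alpha_j = [\Gamma \Lambda^{1/2} \alpha]_i$, where $\alpha := (\alpha_1,\dots,\alpha_N)^\transpose$.

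To finish, I would collect the nonnegative scalars $\beta_i := \| u_{\mc K_i} \|_{2,T}$ into a vector $\beta$; the componentwise inequality $0 \le \beta \le \Gamma \Lambda^{1/2}\alpha$ then yields $\text{E}(u_{\mc K}^{\text{d}},T) = \sum_{i=1}^N \beta_i^2 = \| \beta \|_2^2 \le \| \Gamma \Lambda^{1/2}\alpha \|_2^2 \le \| \Gamma \Lambda^{1/2} \|_2^2 \, \| \alpha \|_2^2 = \| \Gamma \Lambda^{1/2} \|_2^2$, since $\| \alpha \|_2^2 = \sum_{i=1}^N \alpha_i^2 = 1$, which is \eqref{eq:bound-input-dec}.

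I expect the delicate point to be the bound on the coupling terms: one must argue carefully that the \emph{finite}-horizon energy $\| f_{ij} \|_{2,T}$ is dominated by the \emph{infinite}-horizon $\mc L_2$ gain $\gamma_{ij}$, which hinges on $A_j$ being Schur stable (so $\gamma_{ij}$ is finite), on $x_j(0)=0$, and on $v_j$ vanishing past the control horizon so that truncation at $T$ only decreases the relevant norms. The remaining passage from the componentwise estimate to the spectral-norm bound, and the observation that every quantity involved is nonnegative, is routine and is exactly where the unit-diagonal normalization in the definitions of $\Lambda$ and $\Gamma$ is used.
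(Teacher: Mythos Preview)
Your argument is correct and follows essentially the same route as the paper: bound $\|v_i\|_{2,T}$ via the local Gramian, bound $\|f_{ij}\|_{2,T}$ via the $\mc L_2$ gain $\gamma_{ij}$ applied to $v_j$, combine with the triangle inequality to get $\|u_{\mc K_i}\|_{2,T}\le(\Gamma\Lambda^{1/2}\alpha)_i$, and then pass to the spectral-norm bound using $\|\alpha\|_2=1$. If anything you are more explicit than the paper about why the finite-horizon truncation is controlled by the infinite-horizon $\mc L_2$ gain, which the paper simply invokes.
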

\begin{proof}
  Let $x_{\text{f}i}$ be the target state of the $i$-th cluster, and
  let $\| x_{\text{f}i} \|_2 = \alpha_i$. From equations \eqref{eq:
    energy lambda min} and \eqref{eq:decoupled law}, and from the
  definition of $\mc L_2$ gain \cite{SS-IP:05} it follows that
  \begin{align*}
    \| v_i \|_{2,T} &\le \frac{\alpha_i}{\lambda_\text{min}^{1/2} (\mc
      W_{i,T})} , \;\; \| f_{ij} \|_{2,T} \le \frac{\gamma_{ij}
      \alpha_{j}}{ \lambda_\text{min}^{1/2} (\mc W_{j,T}) } .
  \end{align*}
  Moreover, due to the triangle inequality, we have
  \begin{align*}
    \| u_i \|_{2,T} &\le \| v_i \|_{2,T} + \sum_{j \in \mc N_i} \|
    f_{ij} \|_{2,T} \\ &\le \frac{\alpha_i}{\lambda_\text{min}^{1/2}
      (\mc W_{i,T})} + \sum_{j \in \mc N_i} \frac{\gamma_{ij}
      \alpha_{j}}{ \lambda_\text{min}^{1/2} (\mc W_{j,T}) } = \Gamma_i
    \Lambda^{1/2} \alpha,
  \end{align*}
  where $\Gamma_i$ is the $i$-th row of $\Gamma$ defined in
  \eqref{eq:gain2}, and $\alpha$ is the vector of $\alpha_{i}$ with $i
  \in \until{N}$. By using \eqref{eq:gain2} and the fact that $\|
  u_{\mc K}^\text{d} \|_{2,T}^2 = \sum_{i = 1}^N \| u_i \|_{2,T}^2$,
  we obtain
  \begin{align*}
    \| u_{\mc K}^\text{d} \|_{2,T}^2 \le \max_{\| \alpha \| = 1}
    \alpha^\transpose \Lambda^{1/2} \Gamma^\transpose \Gamma
    \Lambda^{1/2} \alpha = \lambda_\text{max} \left( \Lambda^{1/2}
      \Gamma^\transpose \Gamma \Lambda^{1/2} \right) ,
  \end{align*}
  from which the statement follows.
\end{proof}

In Theorem \ref{thm: energy decoupled} we derive a bound on the energy
needed to control a network via our decoupled control law. Theorem
\ref{thm: energy decoupled} has several general consequences which we
now describe. First, due to equation \eqref{eq: energy lambda min}, if
the set $\mc K$ of control nodes includes the boundary nodes of a
network partition $\mc P$, then
\begin{align}\label{eq:lower bound}
  \lambda_\text{min} (\mc W_{\mc K,T}) \ge \frac{1}{\| \Gamma
    \Lambda^{1/2} \|_2^{2} } ,
\end{align}
where $\Lambda$ and $\Gamma$ are the local energy matrix and the $\mc
L_2$ gains matrix for the partition $\mc P$. This bound on the
smallest eigenvalue of the controllability Gramian is novel (see
\cite{HWW-SYM-SCA:96}), and it highlights that the controllability of
a clustered network depends on the controllability of the isolated
clusters via the matrix $\Lambda$, and on their interconnections
strength via the $\mc L_2$ gains matrix $\Gamma$. Second, the control
energy for our decoupled control law does not depend on the
cardinality of the whole network. In fact, notice that
\begin{align}\label{eq: decoupled bound}
  \| \Gamma \Lambda^{1/2} \|_2^2 &\le \| \Gamma \|_2^2 \| \Lambda \|_2
  \le \| \Gamma \|_1 \| \Gamma \|_\infty \|\Lambda\|_\infty ,
\end{align}
and that, independently of the network dimension, $\| \Gamma \|_1$ and
$\| \Gamma \|_\infty$ remain bounded if, for instance, the network
weights and the nodes degrees are bounded. A related example is in
Section \ref{section:example circle}. Third, our decoupled control
strategy is best suited for inherently clustered networks, consisting
of weakly coupled components. Finally, since the energy to control a
network via the decoupled control law depends on local properties of
the network partitions, an appropriate partitioning method may be
developed to optimize the performance of the decoupled control law. To
this aim, we state the following corollary of Theorem \ref{thm: energy
  decoupled}, where we derive a bound on the control energy for our
decoupled control law, which is proportional to the interconnection
strength among clusters. Let $\Delta$ be the \emph{interconnection
  matrix} defined by
\begin{align}\label{eq: interconnection matrix}
  \Delta :=
  \begin{bmatrix}
    1 & \| A_{12} \|_2&\cdots & \| A_{1N} \|_2\\
    \| A_{21} \|_2 & 1&\cdots & \| A_{2N} \|_2\\
    \vdots & \vdots &\ddots & \vdots\\
    \|A_{N1} \|_2&\|A_{N2} \|_2& \cdots & 1
  \end{bmatrix}
  .
\end{align}

\begin{corollary}{\bf \emph{(Bound for network
      partitioning)}}\label{corollary: local bound}
  Let $\gamma_{ij}$ be the $\mc L_2$ gain of the system $(A_j ,B_{\mc
    K_j} , B_{\mc K_i}^\transpose A_{ij})$, and let $\bar
  \lambda_\text{max} =\max \setdef{\lambda_\text{max}(A_i)}{ i \in
    \until{N}}<1$. Then,
  \begin{align*}
    \gamma_{ij} \le \frac{\| A_{ij}\|_2 }{1 - \bar \lambda_\text{max}}
    , \text{ for } j \in \until{N} \setminus \{i\},
  \end{align*}
  and, being $T$ the control horizon,
  \begin{align*}
    \text{E} ( u_{\mc K}^\text{d} , T) \le \frac{ \| \Lambda\|_\infty
      \| \Delta \|_1 \| \Delta \|_\infty }{(1 - \bar \lambda_\text{max} )^2 } ,
  \end{align*}
  where $\Lambda$ is the local energy matrix defined in equation
  \eqref{eq:gain1}, and $\Delta$ is the interconnection matrix defined
  in equation \eqref{eq: interconnection matrix}.
\end{corollary}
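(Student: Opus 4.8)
The plan is to establish the two assertions in order: first the entrywise gain estimate $\gamma_{ij}\le\|A_{ij}\|_2/(1-\bar\lambda_\text{max})$, and then the energy bound, which will drop out by feeding this estimate into the chain of inequalities \eqref{eq: decoupled bound} already proved for Theorem~\ref{thm: energy decoupled}. For the gain estimate I would start from the fact that $\gamma_{ij}$ is the $H_\infty$ norm of the strictly proper (hence, since $A_j$ is Schur stable, stable) transfer matrix $G_{ij}(z):=B_{\mc K_i}^\transpose A_{ij}(zI-A_j)^{-1}B_{\mc K_j}$, so by the maximum modulus principle $\gamma_{ij}=\sup_{|z|=1}\|G_{ij}(z)\|_2$. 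Submultiplicativity of the spectral norm then gives, for $|z|=1$,
\begin{align*}
  \|G_{ij}(z)\|_2 &\le \|B_{\mc K_i}^\transpose\|_2\,\|A_{ij}\|_2\,\|(zI-A_j)^{-1}\|_2\,\|B_{\mc K_j}\|_2 \\
  &\le \|A_{ij}\|_2\,\|(zI-A_j)^{-1}\|_2 ,
\end{align*}
where I use that $B_{\mc K_i}$ and $B_{\mc K_j}$ are column selections of the identity and therefore have spectral norm at most $1$.

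The key step — and the one I expect to be the main obstacle to state cleanly — is the resolvent estimate $\|(zI-A_j)^{-1}\|_2\le(1-\bar\lambda_\text{max})^{-1}$ for $|z|=1$. Here I would invoke that the network is undirected, so each $A_j$ is symmetric and hence normal; for a normal matrix the resolvent norm equals $\max_{\lambda\in\text{spec}(A_j)}|z-\lambda|^{-1}$, and for $|z|=1$ one has $|z-\lambda|\ge\bigl||z|-|\lambda|\bigr|=1-|\lambda|\ge 1-\lambda_\text{max}(A_j)\ge 1-\bar\lambda_\text{max}>0$. Combining with the previous display yields $\gamma_{ij}\le\|A_{ij}\|_2/(1-\bar\lambda_\text{max})$, the first assertion.

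For the energy bound, note that $(1-\bar\lambda_\text{max})^{-1}>1$, so the gain estimate together with the unit diagonals of $\Gamma$ and $\Delta$ shows $0\le\Gamma\le(1-\bar\lambda_\text{max})^{-1}\Delta$ entrywise. Since the induced norms $\|\cdot\|_1$ and $\|\cdot\|_\infty$ (maximum absolute column and row sums) are monotone under entrywise nonnegative domination, this gives $\|\Gamma\|_1\le(1-\bar\lambda_\text{max})^{-1}\|\Delta\|_1$ and $\|\Gamma\|_\infty\le(1-\bar\lambda_\text{max})^{-1}\|\Delta\|_\infty$. Plugging these into the bound $\text{E}(u_{\mc K}^\text{d},T)\le\|\Gamma\Lambda^{1/2}\|_2^2\le\|\Gamma\|_1\,\|\Gamma\|_\infty\,\|\Lambda\|_\infty$ of \eqref{eq: decoupled bound} produces exactly $\text{E}(u_{\mc K}^\text{d},T)\le\|\Lambda\|_\infty\|\Delta\|_1\|\Delta\|_\infty/(1-\bar\lambda_\text{max})^2$.

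Thus the only delicate point is the resolvent inequality; everything else is submultiplicativity and monotonicity of matrix norms. It is worth recording that the passage from the spectral radius to the spectral norm of $(zI-A_j)^{-1}$ genuinely uses normality of $A_j$: for a merely diagonalizable $A_j=V_j(\text{diag})V_j^{-1}$ the same steps would instead produce a factor $\text{cond}(V_j)$, mirroring the role of the eigenvector conditioning in Theorem~\ref{thm:upper bound energy unstable networks}.
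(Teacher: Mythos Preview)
Your proposal is correct and follows essentially the same route as the paper: submultiplicativity to peel off $B_{\mc K_i}^\transpose$, $A_{ij}$, and $B_{\mc K_j}$, a resolvent bound using symmetry of $A_j$, and then the entrywise domination $\Gamma\le(1-\bar\lambda_\text{max})^{-1}\Delta$ plugged into \eqref{eq: decoupled bound}. The only cosmetic difference is that the paper computes $\|(zI-A_j)^{-1}\|_{H_\infty}$ by an explicit eigenvalue calculation of $(I-2\cos\theta\,A_j+A_j^2)^{-1}$, whereas you invoke directly that for a normal matrix the resolvent norm is $\max_{\lambda}|z-\lambda|^{-1}$; both arrive at $(1-\lambda_\text{max}(A_j))^{-1}$.
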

\medskip
\begin{proof}
  Recall that $\gamma_{ij}$ equals the $H_\infty$ gain of the transfer
  matrix of the system $(A_j, B_{\mc K_j}, B_{\mc K_i}^\transpose
  A_{ij})$, that is,
  \begin{align*}
    \gamma_{ij} := \left\|B_{\mc K_i}^\transpose
      A_{ij}(zI-A_j)^{-1}B_{\mc K_j}\right\|_{H_\infty} ,
  \end{align*}
  where $\left\|\cdot\right\|_{H_\infty}$ denotes the $H_\infty$ norm
  \cite{SS-IP:05}.  Since the $H_\infty$ norm satisfies the
  submultiplicative property, we have
  \begin{align*}
    \gamma_{ij}\le \left\|B_{\mc K_i}^\transpose \right\|_{H_\infty}
    \left\|A_{ij}\right\|_{H_\infty}
    \left\|(zI-A_j)^{-1}\right\|_{H_\infty} \left\| B_{\mc
        K_j}\right\|_{H_\infty} .
  \end{align*}
  Notice that the $H_\infty$ norm of a constant transfer matrix
  coincides with its induced $2$-norm. Finally we have $\left\|B_{\mc
      K_i}^\transpose \right\|_2 = \left\|B_{\mc K_j} \right\|_2 = 1$,
  and
\begin{align*}
  &\left\|(zI-A_j)^{-1}\right\|_{H_\infty} := \max_{\theta} \;
  \sigma_\text{max} \left(
    (e^{-{\textit{i} \theta}}I - A_j)^{-1} \right)\\
  &=\max_{\theta} \left[\lambda_\text{max} \left((e^{{\textit{i} \theta}}I - A_j )^{-1}(e^{-{\textit{i} \theta}}I - A_j )^{-1}\right)\right]^{1/2}\\
  &=\max_{\theta} \left[\lambda_\text{max} \left(I-2\cos(\theta) A_j+ A_j ^2)^{-1}\right)\right]^{1/2}\\
  &=\frac{1}{ 1 - \lambda_\text{max}(A_j)} \le \frac{1}{ 1 - \bar
    \lambda_\text{max}} ,
  \end{align*}
  from which the first part of the statement follows. The second
  statement follows from \eqref{eq:bound-input-dec} and \eqref{eq:
    decoupled bound} and from the fact that $(1 - \bar
  \lambda_\text{max})\| \Gamma\|_\infty\le \| \Delta\|_\infty$ and $(1
  - \bar \lambda_\text{max})\| \Gamma\|_1\le \| \Delta\|_1$.
\end{proof}

Analogously to equation \eqref{eq:lower bound}, from Corollary
\ref{corollary: local bound} we conclude that, if the set $\mc K$ of
control nodes includes the boundary nodes of a network partition $\mc
P$, then
\begin{align*}
  \lambda_\text{min} (\mc W_{\mc K,T}) \ge \frac{(1 - \bar
    \lambda_\text{max})^2}{ \| \Lambda \|_\infty \| \Delta \|_1 \|
    \Delta \|_\infty } ,
\end{align*}
where $\Lambda$ and $\Delta$ are the local energy matrix and the
interconnection matrix for the partition $\mc P$, respectively, and
$\bar \lambda_\text{max}$ is a bound on the spectral radius of the
clusters of $\mc P$. 

We conclude this part by noting that our results lead to a novel
notion of \emph{network controllability centrality}, a fundamental
concept in network analysis \cite{MEJN:10}, where network nodes are
ranked according to the product of their local controllability degree
and their interconnection strength with neighboring nodes. Our notion
of network controllability centrality is motivated by Corollary
\ref{corollary: local bound}, where the control energy is bounded by
the scaled product of the worst-case control energy of the isolated
clusters $\| \Lambda \|_\infty$ (least controllable cluster), and the
worst-case clusters interconnection strength $\| \Delta \|_1\| \Delta
\|_\infty$ (strongest interconnection strength). A comparison between
controllability centrality and other centrality notions is left as the
subject of future research.

\subsection{An example of network control via decoupled control
  law}\label{section:example circle}
\begin{figure}
    \centering
    \includegraphics[width=1\columnwidth]{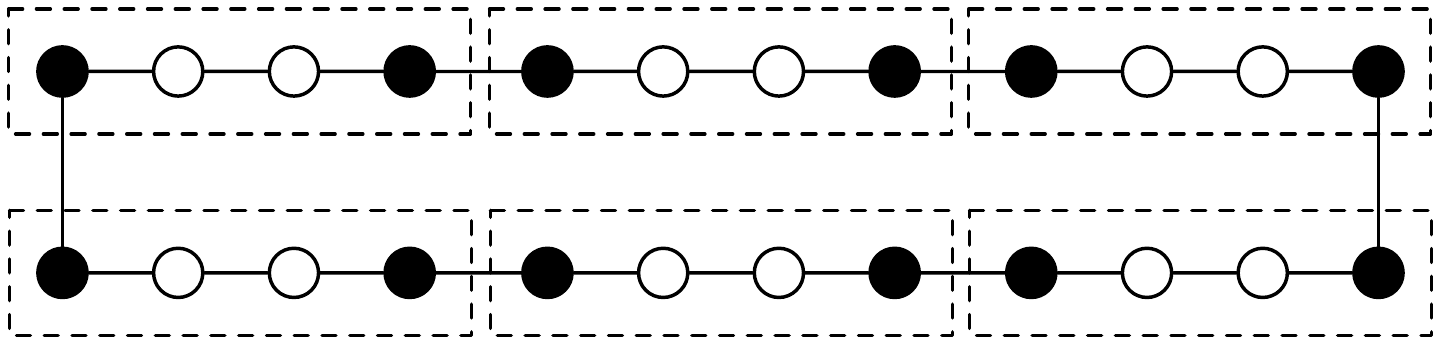}
    \caption{A circulant network with $n = 24$ nodes. The network is
      partitioned into $N = 6$ clusters with $\subscr{n}{b} = 4$ nodes
      each. Controlled nodes are in black.}
    \label{fig:circle}
\end{figure}
\begin{figure}
    \centering
    \includegraphics[width=.49\columnwidth]{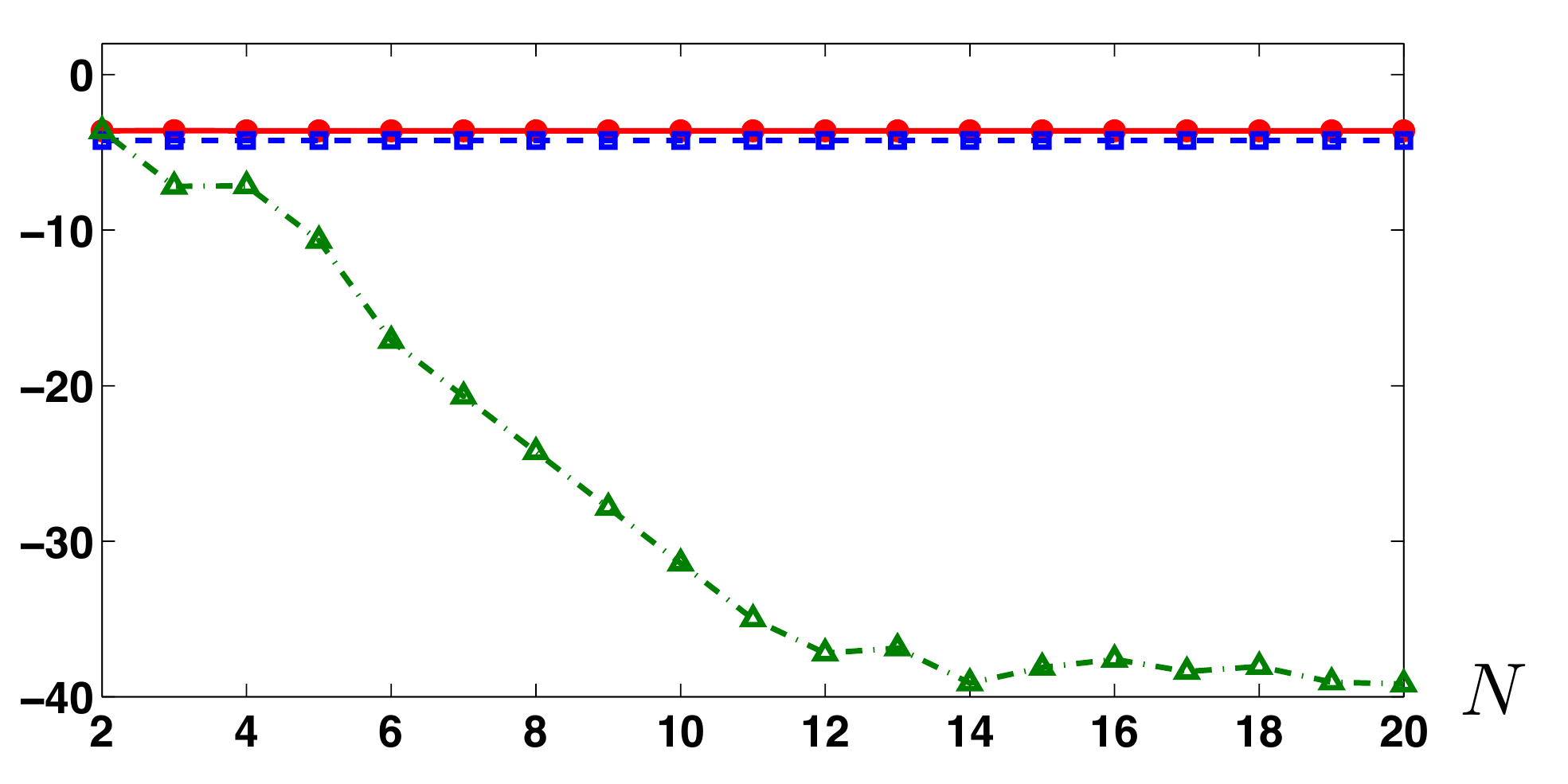}
    \includegraphics[width=.49\columnwidth]{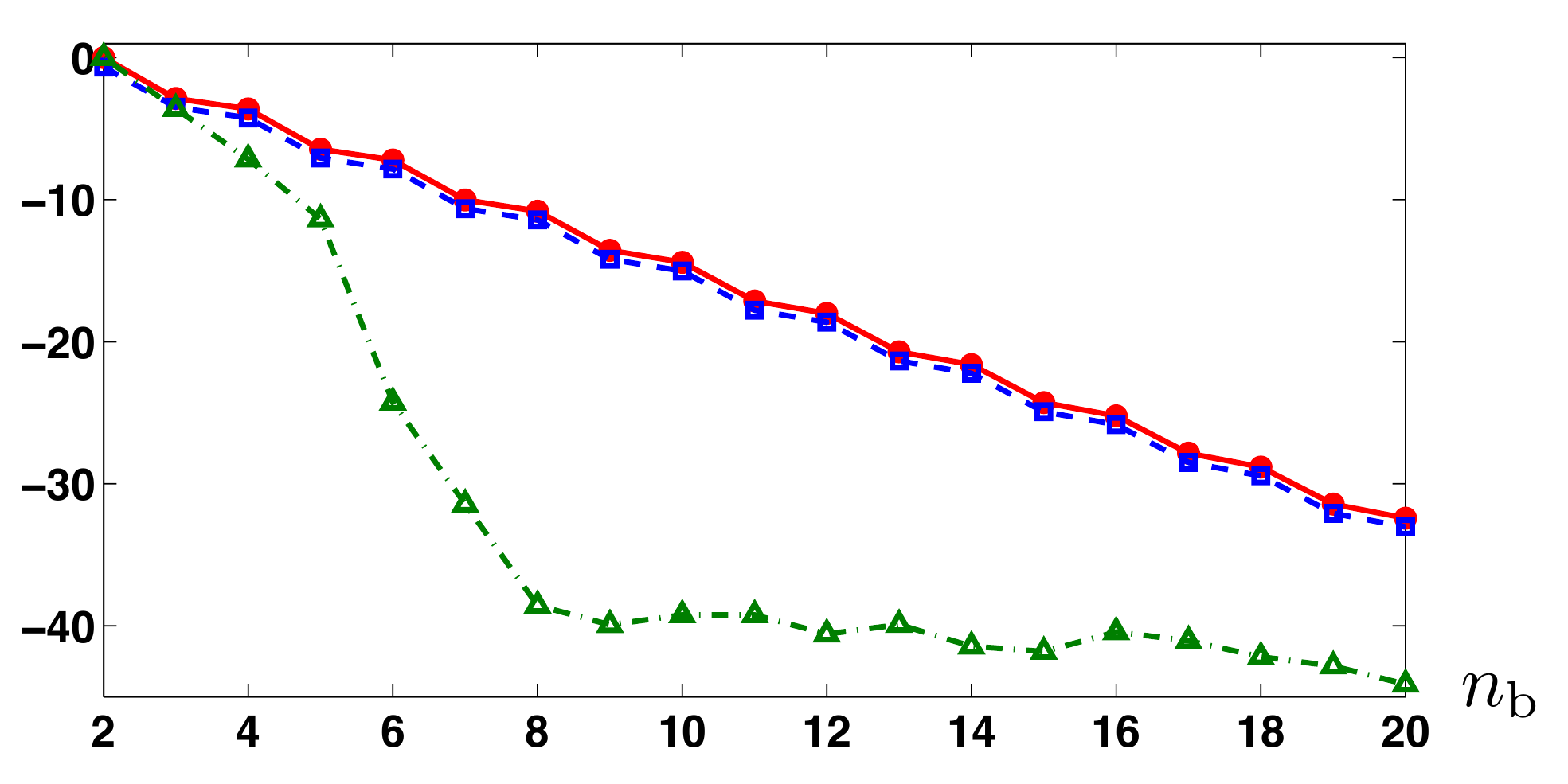}
    \caption{In this figures we study circulant networks partitioned
      as in Section \ref{section:example circle}, and we compare (in a
      logarithmic scale) the performance of our decoupled control law
      against the minimum energy control law. In the left figure we
      maintain constant the number of nodes in each cluster, and we
      report as a function of the number clusters (see Section
      \ref{section:example circle}) (i) the smallest eigenvalue of the
      controllability Gramian with $T = \infty$ and boundary nodes as
      control nodes (solid red), (ii) the bound \eqref{eq:lower bound}
      for the energy performance (see Theorem \ref{thm: energy
        decoupled}) achieved by our decoupled control law (dashed
      blue), and (iii) the smallest eigenvalue of the controllability
      Gramian with $T = \infty$ and control nodes selected randomly
      (dashed-dotted green).  Notice that the energy needed by our
      decoupled control law remains constant when the network
      cardinality grows (the number of control nodes grows as $2N$ and
      that the number of nodes in each cluster remains constant). This
      property is not maintained if the control nodes are chosen
      randomly. In the right figure we report the same quantities as
      in the left figure, while maintaining constant the number of
      clusters and letting the number of nodes in each cluster
      grow. Notice that the smallest eigenvalue of the controllability
      Gramian and our bound \eqref{eq:lower bound} degrade with the
      same rate, while randomly selected control nodes require more
      energy.}
    \label{fig:circle_comparison}
\end{figure}

In this section we demonstrate our technique to control large networks
with an example. Consider a circulant network $\mc G$ with $n =
\subscr{n}{b} N$ nodes, $\subscr{n}{b}, N \in \mathbb{N}$, and
adjacency matrix as in Example \ref{example: bound tight} with $\rho =
0.5$. We partition $\mc G$ into $N$ clusters, so that each cluster
contains $\subscr{n}{b}$ nodes. In particular, we label the nodes in
increasing order,
and for $i\in\{1,\ldots,N\}$ we define the $i$-th cluster to have
vertices $\mc V_i := \{(i-1) \subscr{n}{b} + 1,(i-1) \subscr{n}{b} + 2
, \dots, i \subscr{n}{b}\}$ and control nodes $\mc K_i := \{(i-1)
\subscr{n}{b} + 1, i \subscr{n}{b}\}$.

See Fig. \ref{fig:circle} for an example with $\subscr{n}{b} = 4$ and
$N = 6$. It can be numerically verified\footnote{Due to computational
  complexity, we have solved the maximization problem \eqref{prob:
    optimal placement} for the cases $\subscr{n}{b} = 4$ and $N \in
  \{2,\dots,6\}$.} that the set $\mc K$ of control nodes is optimal,
in the sense that it solves the maximization problem
\begin{align}\label{prob: optimal placement}
  \begin{split}
    &\max_{\mc K \subseteq \until{n}} \; \;\;\lambda_\text{min} (\mc W_{\mc K,\infty}),\\
    &\text{subject to} \;\;\;\;\; |\mc K| = 2 N .
  \end{split}
\end{align}


In Fig. \ref{fig:circle_comparison} we validate Theorem \ref{thm:
  energy decoupled} and equation \eqref{eq:lower bound}. Notice that,
although conservative, our bound \eqref{eq:lower bound} captures the
fact that circulant
networks 
can be driven with constant energy to any (unit norm) target state
independently of the network dimension; this result is compatible with
our analysis in Corollary \ref{thm:upper bound energy} and in Section
\ref{sec: analysis decoupled}.  Moreover, our decoupled control law is
a distributed control law achieving this performance. Finally, it can
be shown that for circulant networks, and in fact for all
$d$-dimensional torus networks, the diagonal entries of
$(I-AA^\transpose)^{-1}$ are all equal to each other. Thus, the
selection of the control nodes for the maximization of the trace of
the controllability Gramian is in this case equivalent to a random
positioning of the control nodes (see the Appendix and \cite[Lemma
3.1]{AC-MM:12}).

\section{Examples of Control of Complex
  Networks}\label{sec:simulations}
The main purpose of this section is to illustrate the effectiveness of
our decoupled control law to control complex networks. To this aim, we
first develop a method to select the control nodes based on network
partitioning, and then compare the performance of the decoupled
control law with alternative control schemes. The design of optimal
partitioning algorithms to minimize the energy of the decoupled
control law, and a thorough comparison with existing partitioning
methods \cite{SF:10} are beyond the scope of this work.

\subsection{Selection of the control nodes}\label{sec: nodes selection}
\begin{algorithm}[t]
   {\footnotesize
    
      \SetKwInOut{Input}{Input}
      \SetKwInOut{Output}{Output}
      \SetKwInOut{Set}{Set}
      \SetKwInOut{Title}{Algorithm}
      \SetKwInOut{Require}{Require}

      \Input{Network $\mc G := (\mc V, \mc E)$, Number of control nodes
        $m$\;}


      \Output{Control nodes $\mc K$\;}

      \BlankLine
      \BlankLine
      
      \nl Define an empty set of control nodes $\mc K := \emptyset$\;

      \smallskip

      \nl Initialize trivial partition $\mc P := \mc V$ with no
      boundary nodes $\subscr{\Psi}{f} := \emptyset$\;

      \smallskip

      \While{$|\mc K| < m$}{
        
        \smallskip

        \nl Select least controllable cluster $\ell = \arg\min
        \setdef{\lambda_\text{min} (\mc W_{i,T})}{ i \in \until{|\mc
            P|}}$ \;

        \smallskip
        
        \nl Compute Fiedler two-partition $\subscr{\mc P}{f}$ of
        $\ell$-th cluster\;
        
        \smallskip

        \nl Compute boundary nodes $\subscr{\Psi}{f}$ of $\subscr{\mc P}{f}$\;

        \smallskip
        
        \nl Update partition $\mc P$ with $\subscr{\mc P}{f}$\;
        
        \smallskip

        \nl Update control nodes with boundary nodes $\mc K = \mc K
        \cup \subscr{\Psi}{f}$\;

      }

      \smallskip

      \nl \lIf{$|\mc K| > m$}{
        Remove boundary nodes of last partition $\mc K = \mc K
        \setminus \subscr{\Psi}{f}$}

      \smallskip

      \nl \lIf{$|\mc K| < m$}{ Add $m - |\mc K|$ control nodes to $\mc
        K$ as in Remark \ref{remark:selection within cluster}}

      \smallskip 

      \nl \Return $\mc K$\;
        
    }
  \caption{\textit{Selection of the control nodes}}
  \label{algo: selection}
\end{algorithm} 

For a connected network $\mc G := (\mc V , \mc E)$ with weighted
adjacency matrix $A$, let $\subscr{\mc P}{f} := \{\mc V_1 , \mc V_2\}$
be the two-partition of $\mc G$ determined by its Fiedler eigenvector
\cite{SF:10,MF:73},\footnote{Let $\subscr{v}{f}$ be the Fiedler
  eigenvector of the network Laplacian matrix. The two-partition
  determined by $\subscr{v}{f}$ is uniquely determined by the sign of
  the entries of $\subscr{v}{f}$ \cite{SF:10}.} and let
$\subscr{\Psi}{f}$ be the boundary nodes of the partition $\subscr{\mc
  P}{f}$.  Our method to select control nodes in a connected network
is described in Algorithm \ref{algo: selection}. Loosely speaking, our
method consists of recursively computing Fielder partitions of
subnetworks of $\mc G$, and selecting the boundary nodes of each
partition as control nodes. Notice that (i) the algorithm repetitively
selects control nodes in the least controllable cluster to improve
local controllability (line $3$), (ii) the set of control nodes
contains the boundary nodes of a network partition (lines $4,5,7$), so
that our decoupled control law can be implemented, and (iii) the set
of control nodes $\mc K$ is increasing throughout the execution of the
algorithm. Consequently, the smallest eigenvalue of the
controllability Gramian is nondecreasing throughout the execution of
the algorithm. In the last part of the algorithm (line $9$) remaining
control nodes are assigned according to a heuristic procedure. Notice
that Algorithm \ref{algo: selection} may return a set of control nodes
from which the network is not controllable. See Section \ref{sec:
  setup decoupling law} for a discussion of this issue.


\begin{remark}{\bf \emph{(Heuristic selection of control
      nodes)}}\label{remark:selection within cluster}
  Different methods can be used to select control nodes in a
  network. Combinatorial methods, heuristic procedures, or random
  selection methods should be employed depending on the network
  dimension and the available computational power. We propose the
  following heuristic method inspired by the notion of \emph{modal
    controllability} \cite{AMAH-AHN:89} to select control nodes within
  each cluster.

  Let $V = [v_{ij}]$ be the matrix of normalized eigenvectors of the
  network adjacency matrix $A$.
  The entry $v_{ij}$ is a measure of the controllability of the mode
  $\lambda_j (A)$ from the control node $i$. In fact, an application
  of the classic PBH test to symmetric matrices shows that $v_{ij} =
  0$ implies that the mode $\lambda_j (A)$ is not controllable from
  node $i$ \cite{TK:80}. By extension, if $v_{ij}$ is small, then the
  $j$-th mode is poorly controllable from node $i$. Let $\phi_i =
  \sum_{j =1}^n (1 - \lambda_j^2 (A)) v_{ij}^2$, and notice that
  $\phi_i$ is a scaled measure of the controllability of all $n$ modes
  $\lambda_1 (A),\dots, \lambda_n (A)$ from the control node $i$. We
  heuristically select the set $\mc K$ of control nodes to maximize
  the smallest controllability parameter $\phi_i$, that is, the set
  $\mc K$ of control nodes is the solution to the maximization problem
  \begin{align}\label{eq:heuristic}
    \begin{split}
      &\max_{\mc K \subseteq \until{n}} \min \; \{\phi_1,\dots,\phi_{k} \} ,\\
      &\text{subject to} \;\;\; |\mc K| = k ,
    \end{split}
  \end{align}
  for a given cardinality $k \in \mathbb{N}$. We remark that our
  heuristic is computationally as hard as computing the eigenvectors
  of each cluster, as the maximization problem \eqref{eq:heuristic}
  can be solved by simply ordering the controllability parameters
  $\phi_i$. \oprocend
\end{remark}


\subsection{Illustrative examples}\label{sec: examples}
In this section we validate our method to control complex networks
with three examples from power networks, social networks, and
epidemics spreading.

\noindent
\textbf{Power network} We consider a network of $n$ generators, and we
describe the dynamics of the $i$-th generator by the linearized swing
equation \cite{PK:94}
\begin{align*}
  m_i \ddot \delta_i + d_i \dot \delta_i = - \sum_{j \in \mc N_i}
  k_{ij} (\delta_i - \delta_j) ,
\end{align*}
where, for the $i$-th generator, $m_i > 0$ and $d_i > 0$ are the
inertia and damping coefficients,
$\map{\delta_i}{\real}{\real_{[0,2\pi]}}$ is the phase angle, and
$k_{ij}$ is the susceptance of the power line $(i,j)$. As in
\cite{FD-FB:09z} we assume that $m_i / d_i \ll 1$, and we approximate
the generator dynamics with a first-order equation. Finally, we
discretize the network by using the Euler method with discretization
accuracy $h$, so that the dynamics of the $i$-th generator read as
\begin{align*}
  \delta_i(t+1) = \delta_i (t) - \frac{h}{d_i} \sum_{j \in \mc N_i}
  k_{ij} (\delta_i (t) - \delta_j (t)) .
\end{align*}

For our numerical study we consider the standard IEEE 118 bus system
with numerical parameters taken from \cite{RDZ-CEM-DG:11}. We assume
that every bus is connected to a generator, and we let the
discretization accuracy be $h = 10^{-7}$. The results of this
numerical study are in Fig. \ref{fig:IEEE118}.

\begin{figure*}[tb]
  \centering \subfigure[IEEE 118 bus system]{
    \includegraphics[width=.64\columnwidth]{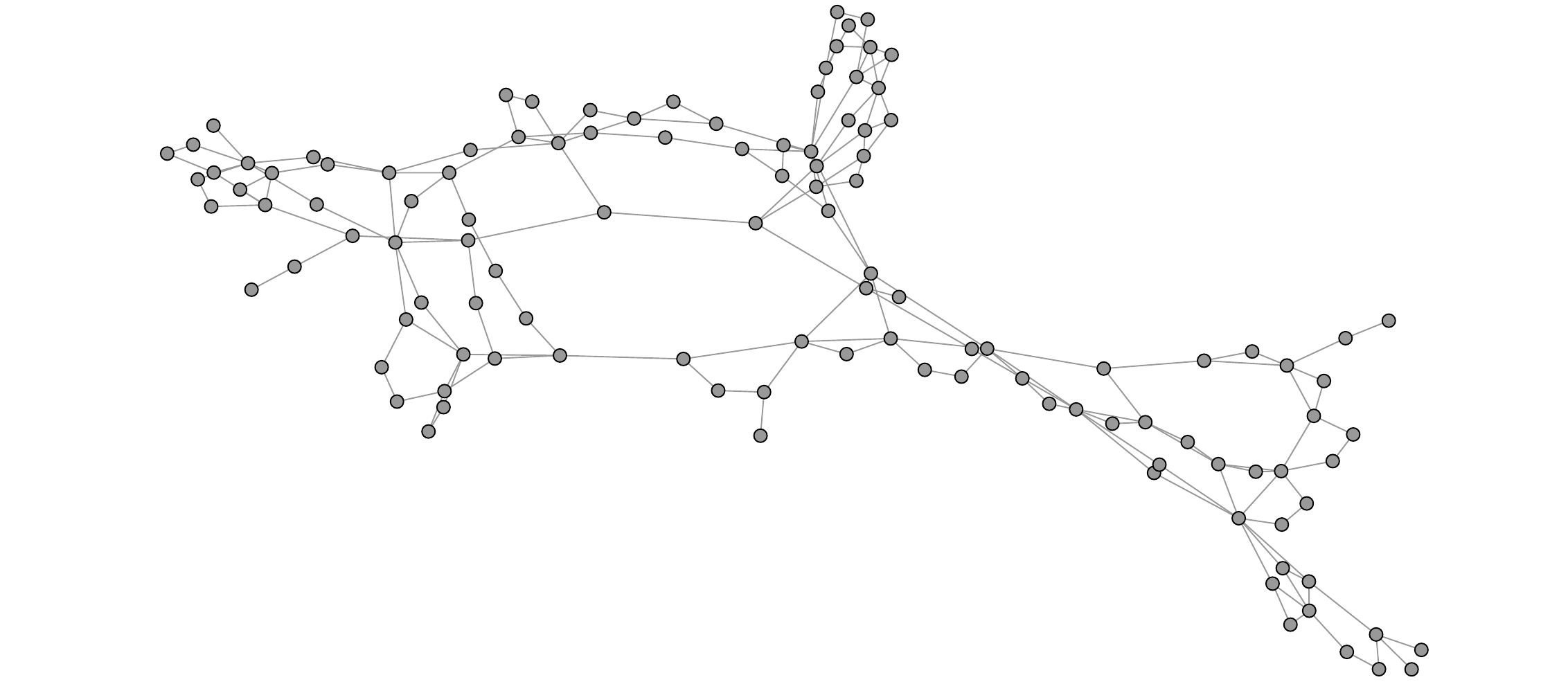}
    \label{fig:IEEE118Layout}
  } \;\,\subfigure[Klavzar bibliography]{
    \includegraphics[width=.64\columnwidth]{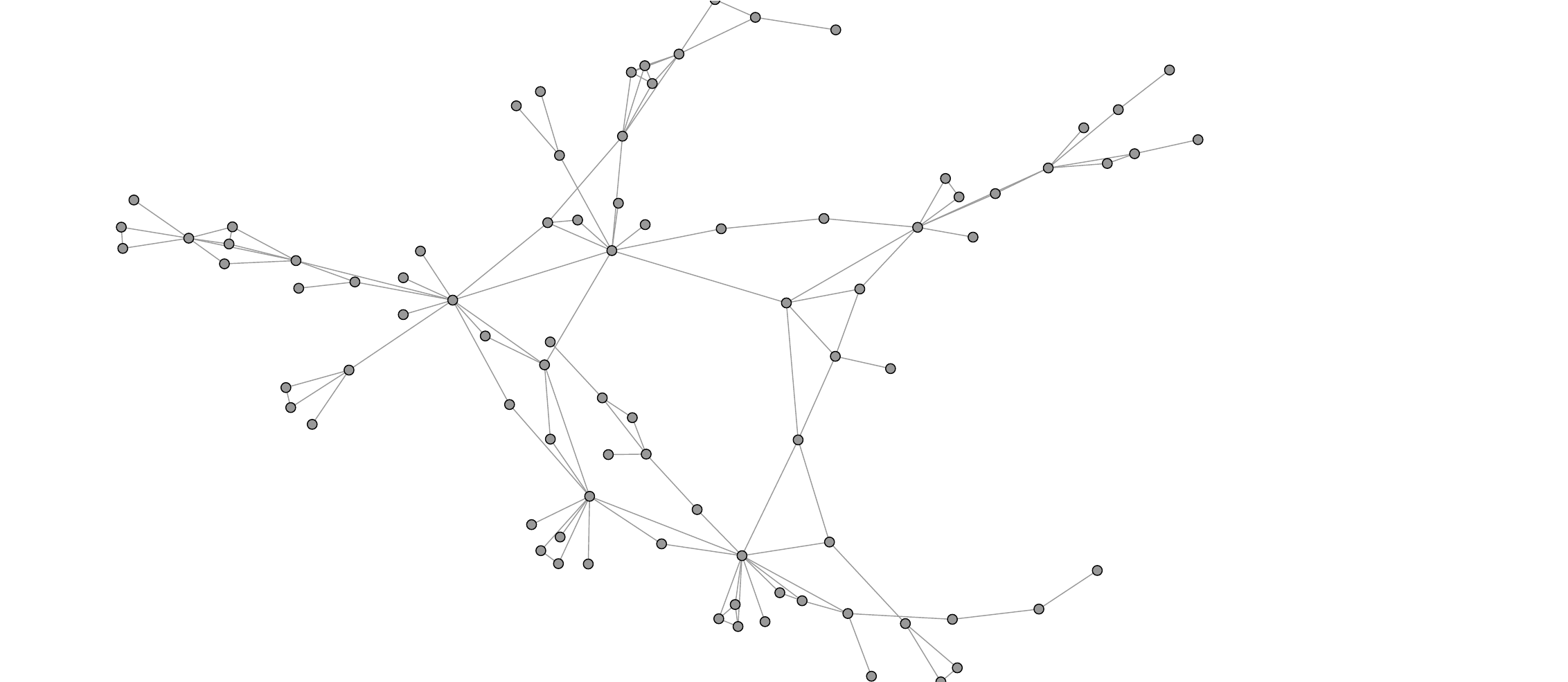}
    \label{fig: Klavzar_Layout}
  }
  \subfigure[Pajek network GD99c]{
    \includegraphics[width=.64\columnwidth]{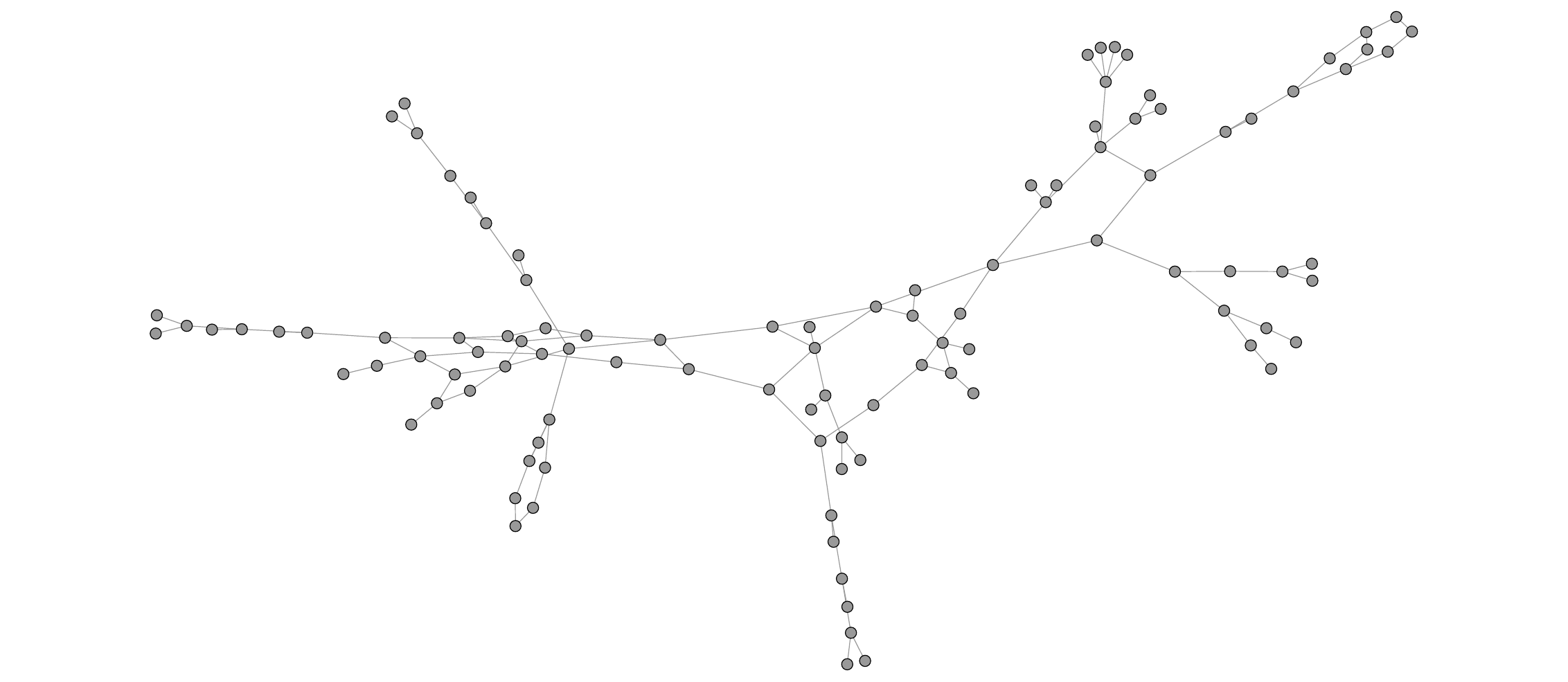}
    \label{fig: epidemics layout}
  }
  \caption[Optional caption for list of figures]{In this figure we
    report a representation of the example networks in Section
    \ref{sec: examples}: Fig. \ref{fig:IEEE118Layout} represents the
    standard IEEE 118 bus system (118 nodes); Fig. \ref{fig:
      Klavzar_Layout} represents the Klavzar bibliography network (86
    nodes); Fig. \ref{fig: epidemics layout} represents the GD99c
    Pajek network (105 nodes). Networks parameters are available at
    \url{http://www.cise.ufl.edu/research/sparse/matrices/}, and their
    layout is obtained via the graph drawing algorithm described in
    \cite{YH:05}.}
  \label{fig:networks}
\end{figure*}

\begin{figure*}[tb]
  \centering \subfigure[IEEE 118 bus system]{
    \includegraphics[width=.65\columnwidth]{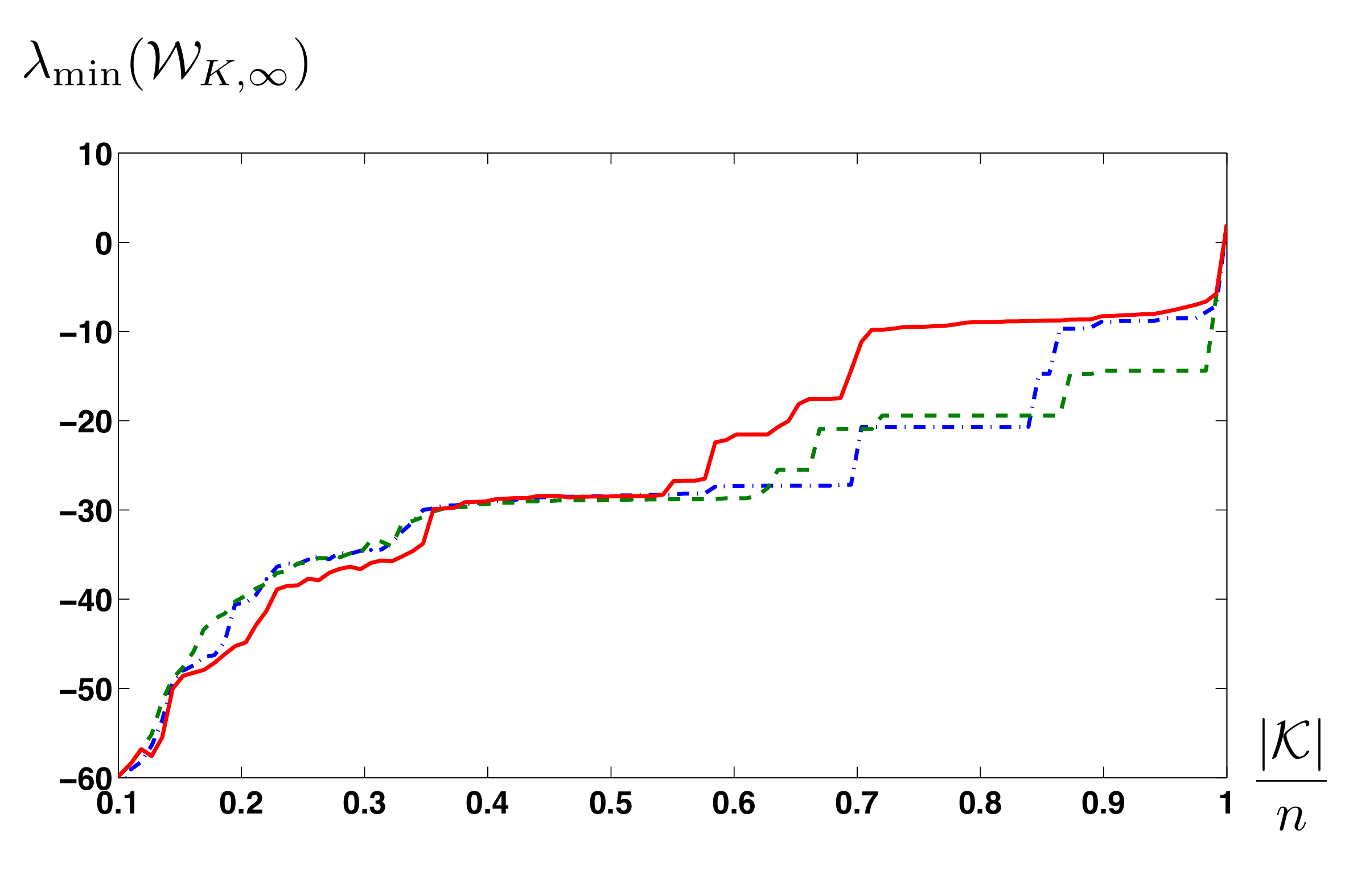}
    \label{fig:IEEE118} }\subfigure[Klavzar bibliography]{
    \includegraphics[width=.65\columnwidth]{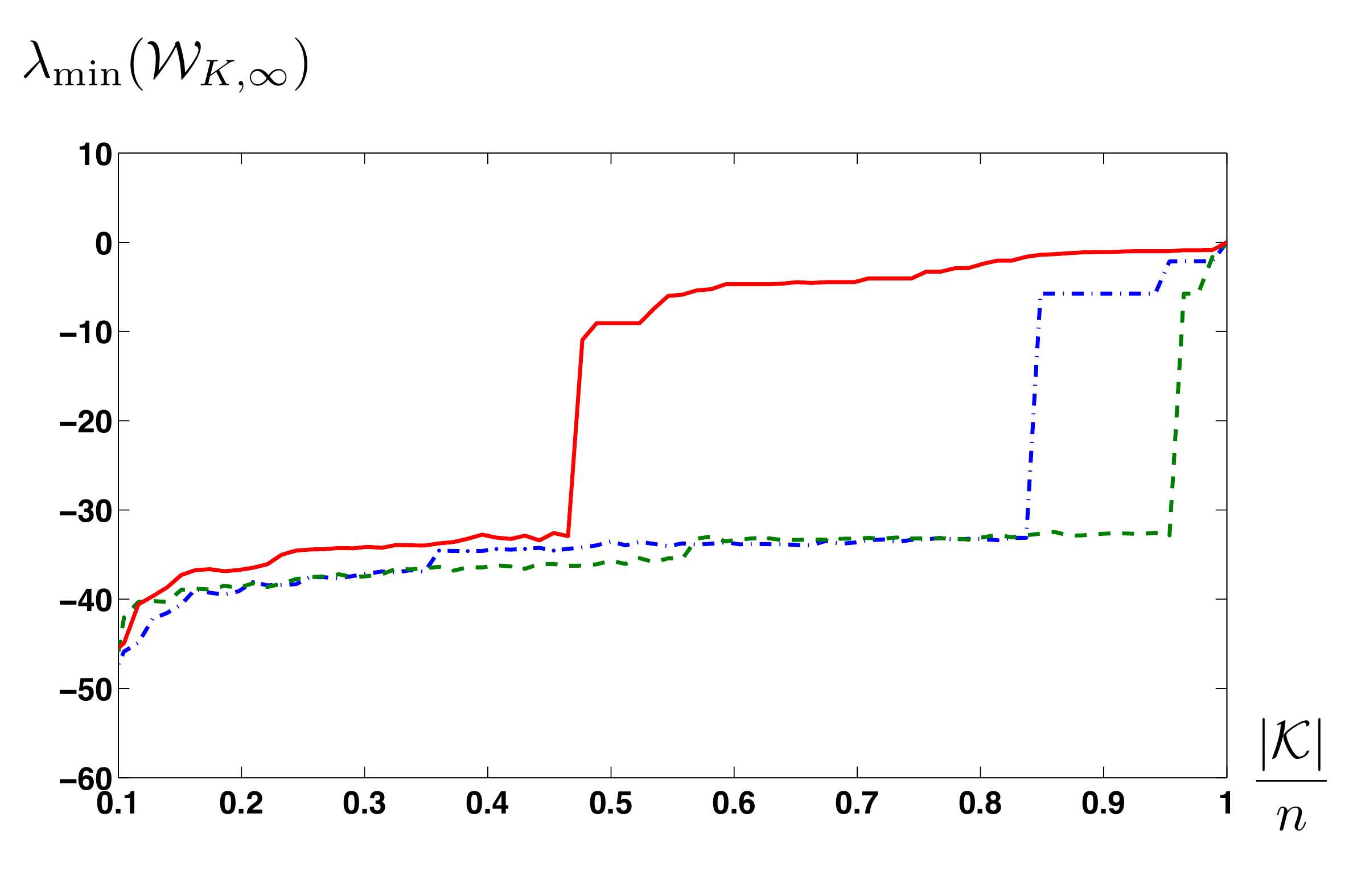}
    \label{fig: klavzar}
  }
  \subfigure[Pajek network GD99c]{
    \includegraphics[width=.65\columnwidth]{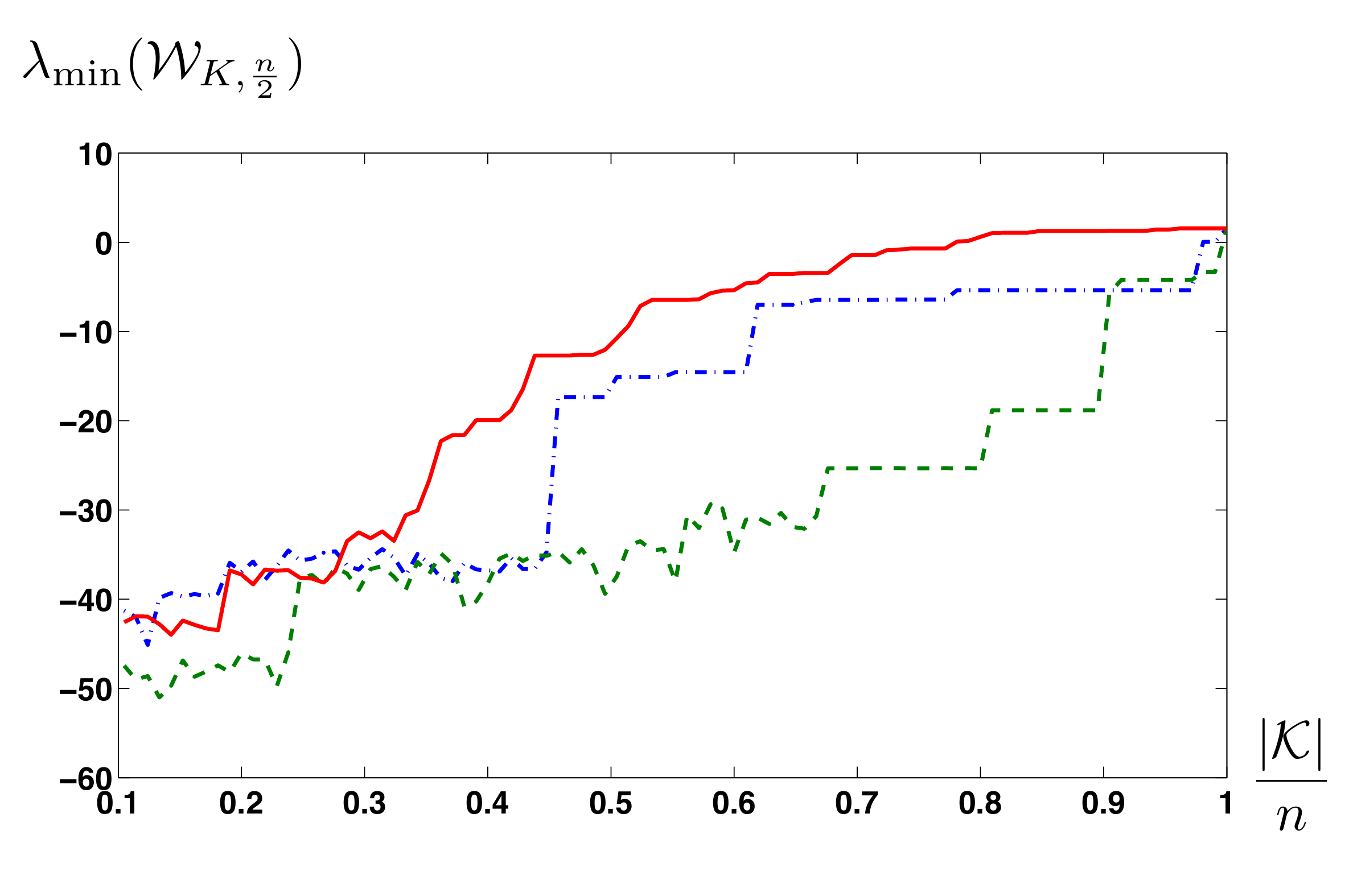}
    \label{fig: epidemics}
  }
  \caption[Optional caption for list of figures]{In this figure we
    compare (in a logarithmic scale) the smallest eigenvalue of the
    controllability Gramian for different choices of the set of
    control nodes. The set of control nodes $\mc K$ is selected
    according to Algorithm \ref{algo: selection} (solid red), trace
    optimization as in Appendix (dashed green), and randomly
    (dashed-dot blue). The cardinality of the control set varies from
    $1$ to $n$. Our decoupled control law algorithm outperforms the
    two counterparts, while being amenable to distributed
    implementation as discussed in Section \ref{sec: setup decoupling
      law}.}
  \label{fig:simulation}
\end{figure*}


\noindent
\textbf{Social network} Inspired by the seminal work \cite{MHDG:74},
the opinion dynamics of a group of individuals forming a network $\mc
G= (\mc V, \mc E)$ can be modeled by the consensus system
\begin{align*}
  x(t + 1) = A x(t) ,
\end{align*}
where $\map{x}{\mathbb{N}}{\real}$ is the vector of the individual
opinions, and the matrix $A = [a_{ij}]$ is row stochastic and
satisfies $a_{ij} = 0$ whenever the edge $(i,j)$ is not in the edge
set $\mc E$. Besides the description of opinion dynamics, consensus
models have found broad applicability in several domains
\cite{FG-LS:10}.

For our numerical study we consider the social network describing the
Klavzar bibliography (see Fig. \ref{fig: Klavzar_Layout}), and we
construct a consensus system by assigning a random nonzero weight to
each edge in the network. The results of this numerical study are in
Fig. \ref{fig: klavzar}.

\begin{remark}{\bf \emph{(Controllability of consensus
      networks)}}\label{remark: controllability consensus}
  Connected consensus networks feature a simple unit eigenvalue
  \cite{FG-LS:10}, 
  so that the controllability Gramian is not defined for the infinite
  control horizon, as the series $\sum_{\tau = 0}^\infty A^\tau B_{\mc
    K} B_{\mc K}^\transpose (A^\transpose)^\tau$ is not convergent. On
  the other hand, it can be shown that the unit eigenvalue is
  controllable at $T = \infty$ by any nonempty set of control nodes
  with zero energy. Then, without loss of generality, the infinite
  horizon controllability Gramian of consensus networks can be defined
  by restricting the dynamics to the subspace orthogonal to the
  consensus space, where the matrix $A$ is Schur stable.  \oprocend
\end{remark}


\noindent
\textbf{Epidemics spreading} The N-intertwined SIS model for the
dynamics of a viral infection over a network with $n$ nodes and
adjacency matrix $A = [a_{ij}]$ reads as \cite{PvM-JO-RK:09}
\begin{align*}
  \dot p_i = -\alpha_i  p_i + (1 - p_i) \beta_i \sum_{j \in \mc N_i}
  a_{ij} p_j , \; i \in \until{n},
\end{align*}
where $\map{p_i}{\real_{\ge 0}}{\real_{[0,1]}}$ is the map describing
the infection probability of node $i$, and $\alpha_i \in \real_{\ge
  0}$, $\beta_i \in \real_{\ge 0}$ are the curing and infection rates
of the $i$-th node. It is known that, for certain values of the ratios
$\alpha_i / \beta_i$, an initial infection $p(0)$ may spread to all
the nodes in the network or converge to zero. 
We consider the simplified model
\begin{align}\label{eq: linear SIS}
 \dot p_i = -\alpha_i  p_i +  \beta_i \sum_{j \in \mc N_i}
  a_{ij} p_j ,
\end{align} 
which is a good approximation of the N-intertwined SIS model at the
initial phase of the epidemics spreading when $p_i$ is small.
We discretize the system
\eqref{eq: linear SIS} as
\begin{align}\label{eq: epidemics}
   p_i (t+1) = (1 - h \alpha_i )  p_i (t) + h \beta_i
  \sum_{j \in \mc N_i} a_{ij}  p_j(t),
\end{align}
where $h \in \real_{> 0}$ is a sufficiently small discretization
parameter, and we study the problem of controlling the spreading of
the infection throughout the network. Notice that an infection can be
controlled for instance by distributing vaccines.

For our numerical study we consider the Pajek social network GD99c
(see Fig. \ref{fig: epidemics layout}), we let $h = 10^{-2}$, and we
select the parameters $\alpha_i$ and $\beta_i$ randomly so that the
network \eqref{eq: epidemics} is unstable. Due to the instability of
the network, we select a finite control horizon of $n/2$ control
steps. The results of this numerical study are in Fig. \ref{fig:
  epidemics}.

From our numerical analysis we draw the following conclusions. First,
the smallest eigenvalue of the controllability Gramian increases
abruptly when the number of control nodes overcomes a certain
threshold, or, equivalently, the control energy decreases abruptly
when the number of control nodes overcomes a certain threshold. This
phenomena is aligned with the \emph{numerical controllability
  transition} identified in \cite{JS-AEM:13} via numerical
simulation. Second, our decoupled control law outperforms the control
strategies dictated by the optimization of the trace of the
controllability Gramian and by random positioning of the control
nodes, while allowing for a distributed and local implementation of
the control law. The difference between the three compared strategies
becomes more evident when the number of control nodes is large. Third
and finally, since our decoupled control law relies on network
partitioning, and computations are performed only on the obtained
subnetworks, it is scalable with the network cardinality and thus
suitable for application to large networks.

We conclude this section with the following consideration. In
Algorithm \ref{algo: selection} we partition each subnetwork by
computing its Fiedler eigenvector. For large networks, this
partitioning scheme may be inefficient, and it may be replaced by a
partitioning scheme with linear complexity, such as the Louvain method
\cite{VDB-JLG-RL-EL:08,JCD-SNY-MB:10}. In this case, our method to
control complex networks has linear complexity, since the decoupled
control law requires only the inversion of local controllability
Gramians whose dimension is independent of the network cardinality. On
the other hand, assuming that the Gauss-Jordan elimination algorithm
is used for the inversion of the controllability Gramian \cite{VS:69},
the computational complexity of the minimum energy control law
\eqref{eq:min energy control} grows at least cubically with the
network cardinality.



\section{Conclusion}\label{sec: conclusion}
In this work we study the problem of controlling complex networks to a
target state. We adopt the smallest eigenvalue of the controllability
Gramian as measure of network controllability, which quantifies the
worst-case control energy. We characterize tradeoffs between the
number of control nodes and the control energy as a function of the
network dynamics. We develop a control strategy with performance
guarantees, consisting of a method to select control nodes based on
network partitioning, and a distributed control law to reach the
target state. Finally, we validate our findings with power systems,
social networks, and epidemics spreading examples.

Important aspects requiring further investigation include (i) the
derivation of tighter bounds for the tradeoff between the number of
control nodes and the control energy, as a function of network
properties, (ii) the study of different controllability measures,
possibly capturing the distributed nature of the problem, and (iii)
the design of an efficient partitioning method to optimize the
performance of our decoupled control law, and (iv) the extension of
our bounds to the design of optimal $\mc H_2$ feedback controllers.

\renewcommand{\theequation}{A-\arabic{equation}}
\setcounter{equation}{0}  
\section*{APPENDIX}  
In this section we derive a closed-form solution to the problem of
selecting control nodes to maximize the trace of the controllability
Gramian, as considered for instance in \cite{THS-JL:13}. To simplify
notation we focus on symmetric networks, although analogous results
hold for asymmetric networks.  Specifically,
we consider the maximization problem
\begin{align}\label{prob: trace}
  \begin{split}
    &\max_{\mc K \subseteq \until{n}} \text{Trace} (\mc W_{\mc K,T}),\\
    &\text{subject to}\;\;\; |\mc K| = m,
  \end{split}
\end{align}
where $m \le n$ and $T \in \mathbb{N}_{\ge 1}$. Notice that
\begin{align*}
  \text{Tr} (\mc W_{\mc K ,T}) &= \text{Tr} \left( \sum_{\tau=0}^{T-1}
    A^\tau B_{\mc K} B_{\mc K}^\transpose A^\tau \right) 
  = \sum_{\tau=0}^{T-1} \text{Tr} \left( B_{\mc K} B_{\mc K}^\transpose
    A^{2\tau} \right) \\ &= \text{Tr} \left( B_{\mc K} B_{\mc
      K}^\transpose \sum_{\tau = 0}^{T-1} A^{2\tau} \right) = \sum_{i \in
    \mc K} \left( \sum_{\tau = 0}^{T-1} A^{2\tau} \right)_{ii} ,
\end{align*}
where we have used that trace is a linear map and is invariant under
cyclic permutations \cite{RAH-CRJ:85}, and where $\left( \sum_{\tau =
    0}^{T-1} A^{2\tau} \right)_{ii}$ denotes the $i$-th diagonal entry
of the matrix $\sum_{\tau = 0}^{T-1} A^{2\tau}$. We conclude that a
solution to the maximization problem \eqref{prob: trace} is the set
$\mc K^*$ containing the indices of the $m$ largest diagonal entries
of $\left( \sum_{\tau = 0}^{T-1} A^{2\tau} \right)$. Notice that, if
$A$ is Schur stable,
then $\sum_{\tau = 0}^\infty A^{2\tau} = (I - A^2)^{-1}$.

\bibliographystyle{IEEEtran}
\bibliography{alias,Main,FB,New}

\end{document}